\newcommand{\cmark}{\ding{51}}%
\newcommand{\xmark}{\ding{55}}%
\newcommand{\KL}{D_{\mathrm{KL}}}
\DeclareMathOperator*{\argmax}{arg\,max}
\DeclareMathOperator*{\argmin}{arg\,min}
\newcommand{\answerTODO}[1][]{\textcolor{red}{\bf [TODO]}}
\newcommand{\justificationTODO}[1][]{\textcolor{red}{\bf [TODO]}}
\theoremstyle{plain}
\newtheorem{lemma}{Lemma}
\theoremstyle{definition}
\newtheorem{definition}{Definition}
\newtheorem{assumption}{Assumption}
\theoremstyle{remark}
\title{AgentMixer: Multi-Agent Correlated Policy Factorization}
\author{
    Zhiyuan Li\textsuperscript{\rm 1},
    Wenshuai Zhao\textsuperscript{\rm 1},
    Lijun Wu\textsuperscript{\rm 2},
   Joni Pajarinen\textsuperscript{\rm 1}
}
\begin{document}

\maketitle

\begin{abstract}
In multi-agent reinforcement learning, centralized training with decentralized execution (CTDE) methods typically assume that agents make decisions based on their local observations independently, which may not lead to a correlated joint policy with coordination. Coordination can be explicitly encouraged during training and individual policies can be trained to imitate the correlated joint policy. However, this may lead to an \textit{asymmetric learning failure} due to the observation mismatch between the joint and individual policies. Inspired by the concept of correlated equilibrium, we introduce a \textit{strategy modification} called AgentMixer that allows agents to correlate their policies. AgentMixer combines individual partially observable policies into a joint fully observable policy non-linearly. To enable decentralized execution, we introduce \textit{Individual-Global-Consistency} to guarantee mode consistency during joint training of the centralized and decentralized policies and prove that AgentMixer converges to an $\epsilon$-approximate Correlated Equilibrium. In the Multi-Agent MuJoCo, SMAC-v2, Matrix Game, and Predator-Prey benchmarks, AgentMixer outperforms or matches state-of-the-art methods.
\end{abstract}

%

\section{Introduction}
\label{intro}

Cooperative multi-agent reinforcement learning (MARL) has attracted substantial attention in recent years owing to its promise in solving many real-world tasks that naturally comprise multiple decision-makers interacting at the same time, such as multi-robot control \cite{GU2023103905}, traffic signal control \cite{DBLP:conf/atal/MaW20}, and autonomous driving \cite{DBLP:journals/corr/Shalev-ShwartzS16a}. In contrast to the single-agent RL settings, learning in multi-agent systems (MAS) poses two primary challenges: 1) coordination, that is, agents learning to work together in order to achieve a common goal, and 2) partial observability which limits each agent to her own local observations and actions. To address these challenges, the commonly used learning framework called Centralized Training Decentralized Execution (CTDE) \cite{NIPS2017_68a97503,NEURIPS2022_9c1535a0} allows agents to access global information during the training phase while during evaluation learned decentralized policies have access only to local information. 

To enhance coordination, one line of research is to use value decomposition (VD). VDN \cite{sunehag2017valuedecomposition} and QMIX \cite{10.5555/3455716.3455894} learn a centralized joint action value function factorized by decentralized agent utility functions. With the structural constraint of Individual-Global-Max (IGM) \cite{pmlr-v97-son19a}, this approach guarantees action consistency between the centralized and decentralized policies. On the other hand, multi-agent policy gradient (MAPG) methods, such as MADDPG \cite{NIPS2017_68a97503} and MAPPO \cite{NEURIPS2022_9c1535a0}, have achieved high performance. However, even when learning a centralized critic, previous works are still constrained by assuming independence among agents during exploration. Inspired by the Correlated Equilibrium (CE)~\cite{maschler2013game} in game theory, MAVEN \cite{NEURIPS2019_f816dc0a} and SIC \cite{10.1007/978-3-030-94662-3_12} introduce a hierarchical control method with an external shared latent variable as additional information for agent coordination. Assuming a pre-defined execution order a few recent works further propose auto-regressive policies to impose coordination among agents by allowing agents to observe other agents' actions, either explicitly \cite{wang2023more,Li_Zhao_Wu_Pajarinen_2024} or implicitly \cite{Li_Liu_Zhang_Wei_Niu_Yang_Liu_Ouyang_2023,NEURIPS2022_69413f87}. Most existing correlated policy approaches violate the requirement for decentralized execution. This paper instead aims to achieve a correlated equilibrium in a fully decentralized way which is crucial for real-world applications such as wireless devices \cite{6482133}, where agents are required to operate autonomously while collectively maximizing performance.

\begin{figure}
    \begin{center}
        \includegraphics[width=0.8\columnwidth]{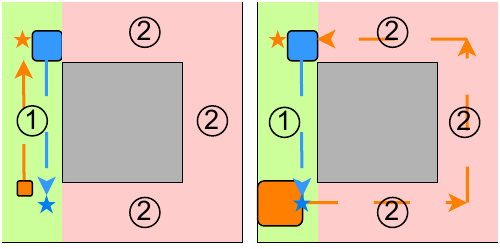}
        \caption{The partially observable bridge crossing task. Two agents (blue and orange boxes), with changing physiques (box sizes) in different episodes as shown in the left and right figures, must navigate to their destinations, marked with stars with corresponding colors, through passageways $1$ or $2$ while avoiding congestion. The expert is conditioned on an omniscient state indicating the physiques of all agents, while an agent cannot see the physique of another agent. Naively learning from the full observation expert policy, the agents would never stagger passageways, and instead cross the same passageway directly, incurring congestion.} 
        \label{fig:bridge}
    \end{center}
\end{figure}

In order to mitigate the difficulty of learning under partial observability, CTDE exploits true state information, usually via a centralized critic, to train individual policies conditioned on the local observation-action history. While it is possible to first learn a centralized expert policy and then train the decentralized agents to follow it, it may result in suboptimal partially observable policies since the omniscient critic or agent has no knowledge of what the decentralized agents do not know, referred to as the \textit{asymmetric learning failure} \cite{pmlr-v139-warrington21a}. Consider a scenario where two agents of distinct physical shapes try to get to their opposite destinations through two possible paths $1$ and $2$, as shown in Figure \ref{fig:bridge}. Successful policies should avoid collision as the body sizes of agents always change and each passage only permits two small agents or one big agent. In CTDE, we can learn the optimal fully observable joint policy conditioned on agents' physiques, which would select the shorter path $1$ when both agents are small. However, naively learning from such a centralized agent could lead to agents jamming on the same passage, as the partially observable agents cannot access the other agent's body size. In contrast, the optimal partially observable policies should ideally ensure that each agent consistently selects distinct passageways to avoid collision. This \textit{asymmetric learning failure} is a prevalent issue in MARL due to the partial observability nature of MAS. While a few works have studied similar challenges in the context of single-agent RL \cite{walsman2023impossibly}, it is worth noting that this issue within the MARL domain has not been thoroughly investigated to the best of our knowledge.

In this paper, we propose \textit{correlated policy factorization}, dubbed AgentMixer, to tackle the above two challenges and achieve CE among agents in a fully decentralized way. Firstly, we propose a novel framework, named \textit{Policy Modifier} (PM), to model the correlated joint policy, which takes as input decentralized partially observable policies and the state information and outputs the modified policies. Consequently, PM acts as an \textit{observer} from the CE perspective and the modified policies form a correlated joint policy. Further, to mitigate the \textit{asymmetric learning failure} when learning decentralized partially observable policies from the correlated joint fully observable policy, we then introduce a novel mechanism called \textit{Individual-Global-Consistency} (IGC), which keeps consistent modes between individual policies and joint policy while allowing correlated exploration in joint policy. Theoretically, we prove that AgentMixer converges to $\epsilon$-approximate Correlated Equilibrium. Experimental results on various benchmarks confirm its strong empirical performance against current state-of-the-art MARL methods.

\section{Related Work}
Modeling complex correlations among agents has been attracting a growing amount of attention in recent years. The centralized training decentralized execution (CTDE) paradigm has demonstrated its success in cooperative multi-agent domain \cite{NIPS2017_68a97503,NEURIPS2022_9c1535a0}. Centralized training with additional global information makes agents cooperate better while decentralized execution enables distributed deployment.

\textbf{Value decomposition.}
Value decomposition methods decompose the joint Q-function into individual utility functions following different interpretations of Individual-Global-Maximum (IGM) \cite{pmlr-v97-son19a}, i.e., the consistency between optimal local actions and optimal joint action. VDN \cite{sunehag2017valuedecomposition} and QMIX \cite{10.5555/3455716.3455894} decompose the joint action-value function by additivity and monotonicity respectively. QTRAN \cite{pmlr-v97-son19a} and QPLEX \cite{wang2021qplex} introduce additional components to enhance the expressive capability of value decomposition. To enhance coordination, MAVEN \cite{NEURIPS2019_f816dc0a} introduces committed exploration among agents into QMIX. Recent works delve into applying value decomposition to actor-critic methods. FACMAC \cite{NEURIPS2021_65b9eea6} and DOP \cite{wang2021dop} combine value decomposition to compute policy gradient with a centralized but factored critic. FOP \cite{pmlr-v139-zhang21m} and MACPF \cite{wang2023more} derive joint soft-Q-function decomposition according to independent and conditional policy factorization respectively.

\textbf{Policy factorization.} Existing approaches commonly assume the independence of agents' policies, modeling the joint policy as the Cartesian Product of each agent’s fully independent policy \cite{NEURIPS2022_9c1535a0,pmlr-v139-zhang21m,LI2024106101,pmlr-v235-zhao24v}. However, such an assumption lacks in modeling complex correlations as it constrains the expressiveness of the joint policy and limits the agents’ capability to coordinate. In contrast, some recent works \cite{wang2023more,NEURIPS2022_69413f87,pmlr-v162-fu22d} explicitly take the dependency among agents by presenting the joint policy in an auto-regressive form based on the chain rule. MAT \cite{NEURIPS2022_69413f87} casts MARL into a sequence modeling problem and introduces Transformer \cite{NIPS2017_3f5ee243} to generate actions. \citet{wang2023more} extends FOP \cite{pmlr-v139-zhang21m} with auto-regressive policy factorization. 
ACE \cite{Li_Liu_Zhang_Wei_Niu_Yang_Liu_Ouyang_2023} transforms multi-agent Markov Decision Process (MMDP) \cite{LITTMAN1994157} into a single-agent Markov Decision Process (MDP) \cite{feinberg2012handbook}, which implicitly models the auto-regressive joint policy. Despite the merits of the auto-regressive model, the fixed execution order may struggle to generalize different cooperative paradigms. Inspired by Correlated Equilibrium \cite{maschler2013game}, SIC \cite{10.1007/978-3-030-94662-3_12} introduces a coordination signal to achieve richer classes of the joint policy and maximizes the mutual information between the signal and the joint policy, which is close to MAVEN. Correlated Q-learning \cite{10.5555/3041838.3041869} generalizes Nash Q-learning \cite{10.5555/945365.964288} based on CE and proposes several variants to resolve the equilibrium selection problem. Similarly, \citet{NEURIPS2019_f968fdc8} learns a hierarchical policy tree based on a shared random seed. \citet{sheng2023negotiated} learns coordinated behavior with recursive reasoning. However, most existing work focuses on fully observable settings or violates the decentralized execution requirement.

Moreover, existing approaches rarely study the issues arising from the use of asymmetric information \cite{pmlr-v139-warrington21a} in CTDE, that is, the joint fully observable critic or agent has access to information unavailable to the partially observable agents. In this paper, we study how to factorize the correlated joint fully observable policy into decentralized policies under partial observability.

\section{Background}
\label{preliminaries}
\subsection{Decentralized Partially Observable Markov Decision Processes}
In this work, we model a fully cooperative multi-agent game with $N$ agents as a \textit{decentralized partially observable Markov decision process} (Dec-POMDP) \cite{10.5555/2967142}, which is formally defined as a tuple $\mathcal{G}=( \mathcal{N},\mathcal{S},\mathcal{O},\mathbb{O},\mathcal{B}, \mathcal{A},\mathcal{T},\Omega,R,\gamma ,\rho_0 )$. $\mathcal{N} = \{ 1,\ldots,N \}$ is a set of agents, $s \in \mathcal{S}$ denotes the state of the environment and $\rho_0$ is the distribution of the initial state. $\mathcal{A} = \prod_{i=1}^{N}A^i$ is the joint action space, $\mathbb{O}=\prod_{i=1}^{N}O^i$ is the set of joint observations. At time step $t$, each agent $i$ receives an individual partial observation $o^i_t \in O^i$ given by the observation function $\mathcal{O} : (a_t, s_{t+1}) \mapsto  P(o_{t+1}|a_t, s_{t+1}) $ where $a_t, s_{t+1}$ and $o_{t+1}$ are the joint actions, states and joint observations respectively. Each agent $i$ uses a stochastic policy $\pi^{i}(a^i_t | h^i_t,\omega^i_t)$ conditioned on its action-observation history $h^i_t=(o^i_0, a^i_0, \ldots , o^i_{t-1}, a^i_{t-1})$ and a random seed $\omega^{i}_{t} \in \Omega_{t}$ to choose an action $a^i_t \in A^i$. A belief state $b_t$ is a probability distribution over states at time $t$, where $b_t \in \mathcal{B}$, and $\mathcal{B}$ is the space of all probability distributions over the state space. Actions $a_t$ drawn from joint policy $\pi(a_t|s_t, \omega_t)$ conditioned on state $s_t$ and joint random seed $\omega_t=(\omega^1_t, \ldots, \omega^N_t)$ change the state according to transition function $\mathcal{T} : (s_{t}, a^1_t, \ldots, a^N_t) \mapsto  P(s_{t+1}|s_{t}, a^1_t, \ldots, a^N_t) $. All agents share the same reward $r_t=R(s_{t}, a^1_t, \ldots, a^N_t)$ based on $s_t$ and $a_t$. $\gamma$ is the discount factor for future rewards. Agents try to maximize the expected total reward, $\mathcal{J}(\pi) = \mathbb{E}_{s_{0},a_0,\dots }[ \sum_{t=0}^{\infty}\gamma ^{t}r_t ]$, where $s_0 \sim \rho _0 (s_0), a_t \sim \pi (a_t|s_t,\omega_t)$.

\subsection{Equilibrium Notions}
We first define a \textit{joint (potentially correlated) policy} as $\pi=\pi^1\odot \pi^2 \cdots \odot \pi^N $. We also denote $\pi^{-i}=\pi^1\odot \cdots \pi^{i-1} \odot \pi^{i+1} \odot \cdots \odot \pi^N $ to be the joint policy excluding the $i^{\text{th}}$ agent. A \textit{product} policy is denoted as $\pi=\pi^1\times \pi^2 \cdots \times \pi^N $ if the distribution of drawing each seed $\omega^i_t$ for different agents is independent. We define the value function $V_{\pi^i, \pi^{-i}}^i(s)$ as the expected returns under state $s$ that $i^{\text{th}}$ agent will receive if all agents follow joint policy $\pi=(\pi^i, \pi^{-i})$:
\begin{equation}
\label{eq:value}
V_{\pi^i, \pi^{-i}}^i(s) = \mathbb{E}_{a^i_{0:\infty} \sim \pi^i,a^{-i}_{0:\infty} \sim \pi^{-i} , s_{1:\infty \sim \mathcal{T}}}[\Sigma_{t=0}^\infty\gamma ^{t}r_{t}|s_{0}=s].
\end{equation}

A \textit{strategy modification} for the $i^{\text{th}}$ agent is a map $f^i : A^i \mapsto A^i$, which maps from the action set to itself. We can define the resulting policy by applying the map on $\pi^i$ as $f^i \diamond \pi^i$.

With the definition above, we can accordingly define the solution concepts.
\begin{definition}[$\epsilon$-approximate Nash Equilibrium]
\label{def:ne}
A \textbf{product} policy $\pi_{*}$ is an $\epsilon$-approximate Nash Equilibrium (NE) if for all $i\in \mathcal{N}$ and any $\epsilon \geq 0$:
\begin{equation}
\label{eq:ne}
V_{\pi^i_*, \pi^{-i}_*}^i(s) \geq \max_{\pi^i}V_{\pi^i, \pi^{-i}_*}^i(s) - \epsilon.
\end{equation}
\end{definition}

\begin{definition}[$\epsilon$-approximate Coarse Correlated Equilibrium]
\label{def:cce}
A \textbf{joint} policy $\pi_{*}$ is an $\epsilon$-approximate Coarse Correlated Equilibrium (CCE) if for all $i\in \mathcal{N}$ and any $\epsilon \geq 0$:
\begin{equation}
\label{eq:cce}
V_{\pi^i_*, \pi^{-i}_*}^i(s) \geq \max_{\pi^i}V_{\pi^i, \pi^{-i}_*}^i(s) - \epsilon.
\end{equation}
\end{definition}

The only difference between Definition \ref{def:ne} and Definition \ref{def:cce} is that an NE has to be a product policy while a CCE can be correlated.

\begin{definition}[$\epsilon$-approximate Correlated Equilibrium]
\label{def:ce}
A \textit{joint} policy $\pi_{*}$ is an $\epsilon$-approximate Correlated Equilibrium (CE) if for all $i\in \mathcal{N}$ and any $\epsilon \geq 0$:
\begin{equation}
\label{eq:ce}
V_{\pi^i_*, \pi^{-i}_*}^i(s) \geq \max_{f^i}V_{(f^i \diamond \pi^i_*) \odot \pi^{-i}_*}^i(s) - \epsilon.
\end{equation}
\end{definition}
CCE forms a larger set of distributions than CE, and CE is richer than NE (i.e., NE $\subseteq$ CE $\subseteq$ CCE).

\section{AgentMixer}
\begin{figure*}[ht!]
\centering
\includegraphics[width=.8\textwidth]{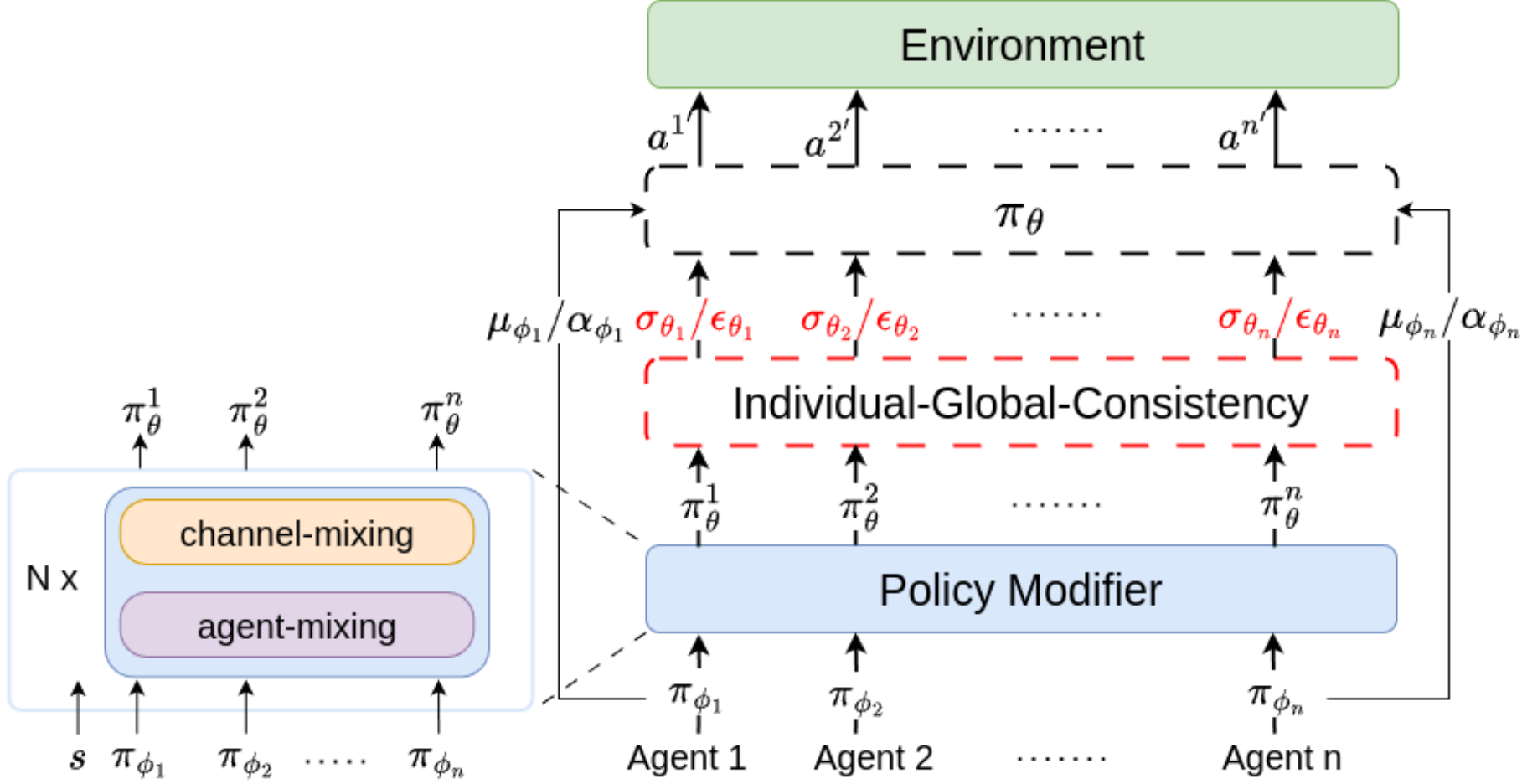} 
\caption{AgentMixer contains two components: 1) \textit{Policy Modifier} takes the individual partially observable policies and state as inputs and produces correlated joint fully observable policy as outputs, and 2) \textit{Individual-Global-Consistency} keeps the mode consistency among the joint policy and individual policies.}
\label{fig:agentmixer}
\end{figure*}

In this work, we propose AgentMixer to achieve correlated policy factorization. The proposed method consists of two main components: \textit{Policy Modifier} that models correlated joint fully observable policy and \textit{Individual-Global-Consistency} that leverages the resulting joint policy for learning the individual policies while mitigating the \textit{asymmetric information issue}.

\subsection{Policy Modifier}
To efficiently introduce correlation among agents, we propose \textit{Policy Modifier}, a novel framework based entirely on multi-layer perceptrons (MLPs) (see the Appendix), which contains two types of MLP layers \cite{NEURIPS2021_cba0a4ee}: \textit{agent-mixing MLPs} and \textit{channel-mixing MLPs}. The agent-mixing MLPs allow inter-agent communication; they operate on each channel of the feature independently. The channel-mixing MLPs allow intra-agent information fusion; they operate on each agent independently. These two types of layers are interleaved to enable the interaction among agents and the correlated representation of the joint policy. Specifically, agent- and channel-mixing can be written as follows:
\begin{equation}
\begin{aligned}
\label{eq:mixer}
\textnormal{H}_{\textnormal{agent}}=\textnormal{H}_{\textnormal{input}} + \textnormal{W}_{\textnormal{agent}}^{(2)}\sigma (\textnormal{W}_{\textnormal{agent}}^{(1)}\textnormal{LayerNorm}(\textnormal{H}_{\textnormal{input}})), \\
 \textnormal{H}_{\textnormal{channel}}=\textnormal{H}_{\textnormal{agent}} + \sigma (\textnormal{W}_{\textnormal{channel}}^{(1)}\textnormal{LayerNorm}(\textnormal{H}_{\textnormal{agent}}))\textnormal{W}_{\textnormal{channel}}^{(2)},
\end{aligned}
\end{equation}
where $\textnormal{H}_{\textnormal{input}}$ is a concatenation of state features and individual policies features and $\textnormal{W}$ denotes fully connected layers. The policy features are derived by compressing the policy parameters through one MLP layer. In the continuous action space, the policy parameters are the mean and standard deviation, while in the discrete action space, the policy parameters are the logits. Then, the output of PM will be combined with individual policies to generate the correlated joint policy, denoted as $\text{PM}([\pi^i]_{i=1}^N)=((f^1 \diamond \pi^{1}), \cdots, (f^N \diamond \pi^N))=(f^1 \diamond \pi^{1})\odot (f^2 \diamond \pi^{2}) \cdots \odot (f^N \diamond \pi^{N})$, where $f$ denotes a \textit{strategy modification}. Consequently, PM maps the individual policies into a correlated joint policy by introducing dependencies among agents.

\subsection{Individual Global Consistency}
With the resulting correlated joint fully observable policy generated by PM, we can easily adopt different single-agent algorithms to get a (sub-)optimal correlated joint fully observable policy. To fulfill decentralized execution, we further ask a question:

\textbf{Question 1: Can we just derive the decentralized partially observable policies by distilling the learned (sub-)optimal correlated joint fully observable policy?}

In this section, we take several steps to provide a negative answer to the above research question. We begin by defining the joint policy and product policy as $\pi_{\theta}(a|s)$ and $\pi_{\phi}(a|b)$ respectively. Let the joint occupancy, $\rho^{\pi}(s,b)$, as the (improper) marginal state-belief distribution induced by a policy $\pi$: $\rho^{\pi}(s,b)={\textstyle\sum}_{t=0}^{\infty}\gamma^tP(s_t=s,b_t=b | \pi)$. Then, the marginal state distribution and marginal belief distribution induced by $\pi$ are denoted as $\rho^{\pi}(s)=\int _{b}\rho^{\pi}(s,b)db$ and $\rho^{\pi}(b)=\int _{s}\rho^{\pi}(s,b)ds$ respectively. To distill the joint policy $\pi_{\theta}(a|s)$ into the product policy $\pi_{\phi}(a|b)$, previous work \cite{ye2023global} leverage imitation learning, i.e., optimizing the asymmetric distillation objective:
\begin{equation}
\label{eq:ado}
\mathbb{E}_{\rho^{\pi_{\beta }}(s,b)}\left [ \KL\left ( \pi_{\theta}(a|s)\parallel \pi_{\phi}(a|b) \right ) \right ],
\end{equation}
where $\pi_{\beta }(s,b)=\beta\pi_{\theta}(a|s)+(1-\beta)\pi_{\phi}(a|b)$, $\pi_{\beta }$ is a mixture of the joint policy $\pi_{\theta}(a|s)$, and the product policy $\pi_{\phi}(a|b)$. The coefficient $\beta$ is annealed to zero during training. This avoids compounding error which grows with time horizon.

We then show that the optimal product policy defined by this objective can be expressed as posterior inference over state conditioned on the joint policy:
\begin{definition}[Implicit product policy]
\label{def:ipp}
For any correlated joint fully observable policy $\pi_{\theta}$, product partially observable behavioral policy $\pi_{\psi}$, belief state $b$, and conditional occupancy $\rho^{\pi_{\psi }}$, we define $\hat{\pi}_{\theta}^{\psi}$ as the implicit product policy of $\pi_{\theta}$ under $\pi_{\psi}$ as:
\begin{equation}
\label{eq:ipp}
\hat{\pi}_{\theta}^{\psi}=\mathbb{E}_{\rho^{\pi_{\psi }}(s|b)}\left [ \pi_{\theta}(a|s) \right ].
\end{equation}
When $\pi_{\psi}=\hat{\pi}_{\theta}^{\psi}$, we refer to this product policy as the implicit product policy of $\pi_{\theta}$, denoted as $\hat{\pi}_{\theta}$.
\end{definition}
Implicit product policy is defined as a posterior inference procedure, marginalizing the conditional occupancy $\rho^{\pi_{\psi }}(s|b)$. Since the observations/belief may not contain information to distinguish two different latent states, the $\rho^{\pi_{\psi }}(s|b)$ is a stochastic distribution, and the implicit product policy is the average of the fully observable policy. Suppose a scenario where the agent learns to cross the ice while avoiding the pits in the middle of the ice. The fully observable policy which can observe the location of the pits will choose safer routes that avoid the pits, i.e., both sides of the ice. However, according to \ref{eq:ipp}, the implicit policy that is not informed of the pit locations will take an average path of those safe routes, despite the danger of pits. The key insight is that directly imitating the fully observable policy will cause \textit{asymmetric learning failure}. We show that the solution to the asymmetric distillation objective in \ref{eq:ado} is equivalent to the implicit product policy \ref{eq:ipp} in the Appendix.

However, the implicit product policy requires marginalizing the conditional occupancy $\rho^{\pi}(s|b)$, which is intractable. Therefore, we can introduce a variational implicit product policy, $\pi_{\eta }$,  as a proxy to the implicit product policy, which can be learned by minimizing the following objective:
\begin{equation}
\begin{aligned}
\label{eq:va}
\mathbb{E}_{\rho^{\pi_{\psi}}(s,b)}\left [ \KL\left ( \pi_{\theta}(a|s)\parallel \pi_{\eta}(a|b) \right ) \right ].
\end{aligned}
\end{equation}
Under sufficient expressiveness and exact updates assumptions, by setting $\pi_{\psi}=\pi_{\eta}$, updating \ref{eq:va} converges to the fixed point, i.e., the implicit product policy (see the Appendix).

We now reason about the \textit{asymmetric learning failure}. To guarantee the optimal product partially observable policy, the divergence between the joint policy and product policy should be strictly zero, which we denote as \textit{identifiability}:
\begin{definition}[Identifiable policy pair]
\label{def:idpp}
Given a correlated joint fully observable policy $\pi_{\theta}$ and a product partially observable policy $\pi_{\phi}$, we define $\{ \pi_{\theta}, \pi_{\phi}\}$ as an identifiable policy pair if and only if $\mathbb{E}_{\rho^{\pi_{\phi }}(s,b)}\left [ \KL\left ( \pi_{\theta}(a|s)\parallel \pi_{\phi}(a|b) \right ) \right ]=0$.
\end{definition}
Identifiable policy pairs require that the product partially observable policy can exactly recover the correlated joint fully observable policy. \textit{Identifiability} then requires the optimal correlated joint fully observable policy and the corresponding implicit product policy to form an identifiable policy pair. Using \textit{identifiability}, we can prove that given an optimal correlated joint fully observable policy, optimizing the asymmetric distillation objective is guaranteed to recover an optimal product partially observable policy:
\begin{restatable}[Convergence of asymmetric distillation]{theorem}{cad}
\label{thm:cad}
Given an optimal correlated joint fully observable policy $\pi_{\theta^*}$ being identifiability, the iteration defined by:
\begin{equation}
\begin{aligned}
\label{eq:cad}
\eta _{k+1}=\argmin_{\eta}\mathbb{E}_{\rho^{\pi_{\eta _{k} }}(s,b)}\left [ \KL\left ( \pi_{\theta^*}(a|s)\parallel \pi_{\eta}(a|b) \right ) \right ]
\end{aligned}
\end{equation}
converges to $\pi_{\eta^*}(a|b)$ that defines an optimal product partially observable policy, as $k \to \infty$.
\end{restatable}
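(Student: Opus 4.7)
The plan is to decompose the theorem into two logical pieces: (i) that the iteration in (\ref{eq:cad}) is a fixed-point iteration whose only fixed point is the implicit product policy $\hat{\pi}_{\theta^*}$, and (ii) that under identifiability this implicit product policy is actually an optimal product partially observable policy, not just a proxy. Splitting the argument this way lets us reuse the machinery already developed in the paper (the equivalence between the asymmetric distillation objective and the implicit product policy, and the fixed-point analysis of (\ref{eq:va})) and isolates the only genuinely new piece, which is the role of identifiability.

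For piece (i), I would first observe that, holding $\eta_k$ fixed, the inner minimization in (\ref{eq:cad}) is exactly the variational objective (\ref{eq:va}) with behavioral policy $\pi_\psi = \pi_{\eta_k}$ and teacher $\pi_\theta = \pi_{\theta^*}$. By the equivalence between the minimizer of (\ref{eq:ado}) and Definition \ref{def:ipp} (stated and proved in the appendix), we get the closed form $\pi_{\eta_{k+1}}(a|b) = \mathbb{E}_{\rho^{\pi_{\eta_k}}(s|b)}\bigl[\pi_{\theta^*}(a|s)\bigr]$. Hence the iteration is exactly the fixed-point map $\eta_{k+1}= T(\eta_k)$ whose fixed points satisfy $\pi_{\eta^*}=\hat{\pi}_{\theta^*}^{\eta^*}$, i.e., the implicit product policy $\hat{\pi}_{\theta^*}$ in the sense of Definition \ref{def:ipp}. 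Under the sufficient-expressiveness and exact-update assumptions invoked right after (\ref{eq:va}), this fixed-point iteration converges to $\hat{\pi}_{\theta^*}$.

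For piece (ii), I would invoke Definition \ref{def:idpp}. Identifiability of $\pi_{\theta^*}$ supplies a product policy $\pi_\phi$ with $\mathbb{E}_{\rho^{\pi_\phi}(s,b)}\bigl[\KL(\pi_{\theta^*}(a|s)\,\|\,\pi_\phi(a|b))\bigr]=0$. Because KL is nonnegative and vanishes only when the two conditionals agree, $\pi_\phi(a|b)$ coincides with $\pi_{\theta^*}(a|s)$ on the support of $\rho^{\pi_\phi}$, which after marginalizing $s\sim\rho^{\pi_\phi}(s|b)$ gives $\pi_\phi(a|b)=\mathbb{E}_{\rho^{\pi_\phi}(s|b)}[\pi_{\theta^*}(a|s)]=T(\phi)(a|b)$. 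So $\pi_\phi$ is itself a fixed point of $T$, and by the uniqueness of the fixed point from piece (i) we conclude $\pi_{\eta^*}=\pi_\phi$. Optimality then transfers for free: since the induced conditional action distributions of $\pi_{\eta^*}$ and $\pi_{\theta^*}$ agree along the occupancy, the value function definition (\ref{eq:value}) gives $V^{i}_{\pi_{\eta^*}}=V^{i}_{\pi_{\theta^*}}$, so $\pi_{\eta^*}$ inherits optimality from the teacher.

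The hardest step is the convergence of the fixed-point iteration $\eta_{k+1}=T(\eta_k)$: $T$ is not obviously a contraction in any standard policy metric because the occupancy $\rho^{\pi_{\eta_k}}$ itself shifts with $k$. The cleanest route is probably a monotone-descent argument using the KL in (\ref{eq:va}) as a potential, coupled with the expressiveness assumption to rule out stalling at strict suboptima; an alternative is a direct contraction argument in the $\ell_\infty$-norm on conditional distributions over beliefs, exploiting the smoothing effect of the posterior $\rho^{\pi_\eta}(s|b)$. A secondary subtlety, which is exactly what identifiability buys us, is upgrading the expectation-zero KL in Definition \ref{def:idpp} to pointwise equality of the conditionals on the relevant support; I would state this carefully as an almost-everywhere claim with respect to $\rho^{\pi_\phi}$ so that the final value equivalence goes through.
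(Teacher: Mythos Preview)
Your two-piece decomposition (converge to the implicit product policy, then upgrade via identifiability) matches the paper's proof in outline, and your piece~(ii) is actually more explicit than the paper about why identifiability forces $\pi_\phi=\hat{\pi}_{\theta^*}$ and why value equivalence with $\pi_{\theta^*}$ follows. The substantive divergence is in piece~(i). You frame convergence as a fixed-point problem and propose either a monotone-descent or an $\ell_\infty$-contraction argument, flagging this as the hard step. The paper sidesteps both: its Lemma~\ref{lm:va} uses the observation that exact minimization at each round forces $\pi_{\eta_k}$ and $\pi_{\eta^*}$ to induce \emph{identical} belief marginals for the first $k$ time steps (same initial distribution, same dynamics, and the KL is driven to zero on the visited beliefs), so the total variation between $\rho^{\pi_{\eta_k}}(b)$ and $\rho^{\pi_{\eta^*}}(b)$ is bounded by a geometric tail $O(\gamma^k)$. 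This is considerably more elementary than a contraction argument and does not require any potential-function machinery. With that convergence in hand, your uniqueness step becomes automatic (any fixed point is a stationary iterate, hence equals the limit), so you need not treat it as a separate claim. In short: your plan is correct, but the convergence mechanism you anticipate is harder than necessary; the paper's time-step coupling plus discount-tail bound is the cleaner route.
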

\begin{proof}
See the Appendix for a detailed proof.
\end{proof}
Theorem \ref{thm:cad} shows that \textit{identifiability} of the optimal joint policy defines a sufficient condition to guarantee the thorough distillation of the optimal joint fully observable policy into product partially observable policies. Unfortunately, the \textit{identifiability} imposes a strong limitation on the applicability of asymmetric distillation. Hereby, we can conclude a negative answer to the \textbf{Question 1}. Therefore, we propose to modify the joint policy online by imposing a constraint on the mode between the joint policy and the decentralized policies to form an (approximately) identifiable policy pair. Furthermore, instead of naively applying distillation on the learned joint policy, we simultaneously learn the correlated joint fully observable policy and its product partially observable counterpart. We will show that the interleaving of the two learning processes moves the product partially observable policy closer to Correlated Equilibrium, i.e., the optimal product partially observable policy.

We now use the insight from Theorem \ref{thm:cad} and the definition of \textit{identifiability} to define \textit{Individual-Global-Consistency} (IGC), which keeps consistent modes between product partially observable policy and correlated joint fully observable policy. 
\begin{definition}[IGC]
\label{def:igc}
For a correlated joint fully observable policy $\pi_{\theta}(a|s)$, if there exists a product partially observable policy $\pi_{\phi}(a|b)=\pi_{\phi^1}(a|h^1)\times \pi_{\phi^2}(a|h^2) \cdots \times \pi_{\phi^N}(a|h^N) $, such that 
\begin{equation}
\label{eq:igc}
\text{Mo}(\pi_{\theta})=\left(
\text{Mo}(\pi_{\phi^1}),
\dots,
\text{Mo}(\pi_{\phi^N})
\right), 
\end{equation}
where $Mo(\cdot)$ denotes the mode of distribution. Then, we say that $\pi_{\phi}(a|b)$ satisfy IGC.
\end{definition}
IGC enables the actions that occur most frequently in the joint policy and the product policy to be equivalent. Crucially, IGC minimizes the divergence between the two policies while allowing correlated exploration in the joint policy. One may find that IGC and IGM share some similarities. IGM ensures value-based individual optimal actions constitute the optimal joint action, while IGC maintains policy-based consistency with explicit consideration of partial observability. For a detailed clarification of the connection between IGC and IGM, please refer to the Appendix.

\subsubsection{Implementation of IGC}
\label{subsec:igc}
To preserve IGC, we adopt the method of disentanglement between exploration and exploitation to decompose the joint policy into two components: one for the mode (exploitation) and the other for the deviation (exploration). Then, IGC can be enforced through an equality constraint on the mode of joint policy and individual policies. Based on this disentanglement, agents are able to coordinate their exploration through the centralized policy. In practice, we divide the implementation of IGC into two categories: continuous action space and discrete action space.

\textbf{Continuous Case: } In this case, we assume the continuous action policy of agent $i$ as a Gaussian distribution with mean $\mu_{\phi^i}$ and standard deviation $\sigma_{\phi^i}$: $\pi_{\phi^i}(a|h^i)=\mathcal{N}(\mu_{\phi^i}(h^i),\sigma^2_{\phi^i}(h^i))$.
Since the mode of a Gaussian distribution is equal to the mean, we set the mean of joint policy as the collection of individual policies while the standard deviation is generated by PM: $\pi_{\theta}(a|s)=\mathcal{N}(([\mu_{\phi^i}]_{i=1}^{N}),\sigma^2_{\theta}(s))$.

\begin{figure*}[ht!]
\centering
\includegraphics[width=.85\textwidth]{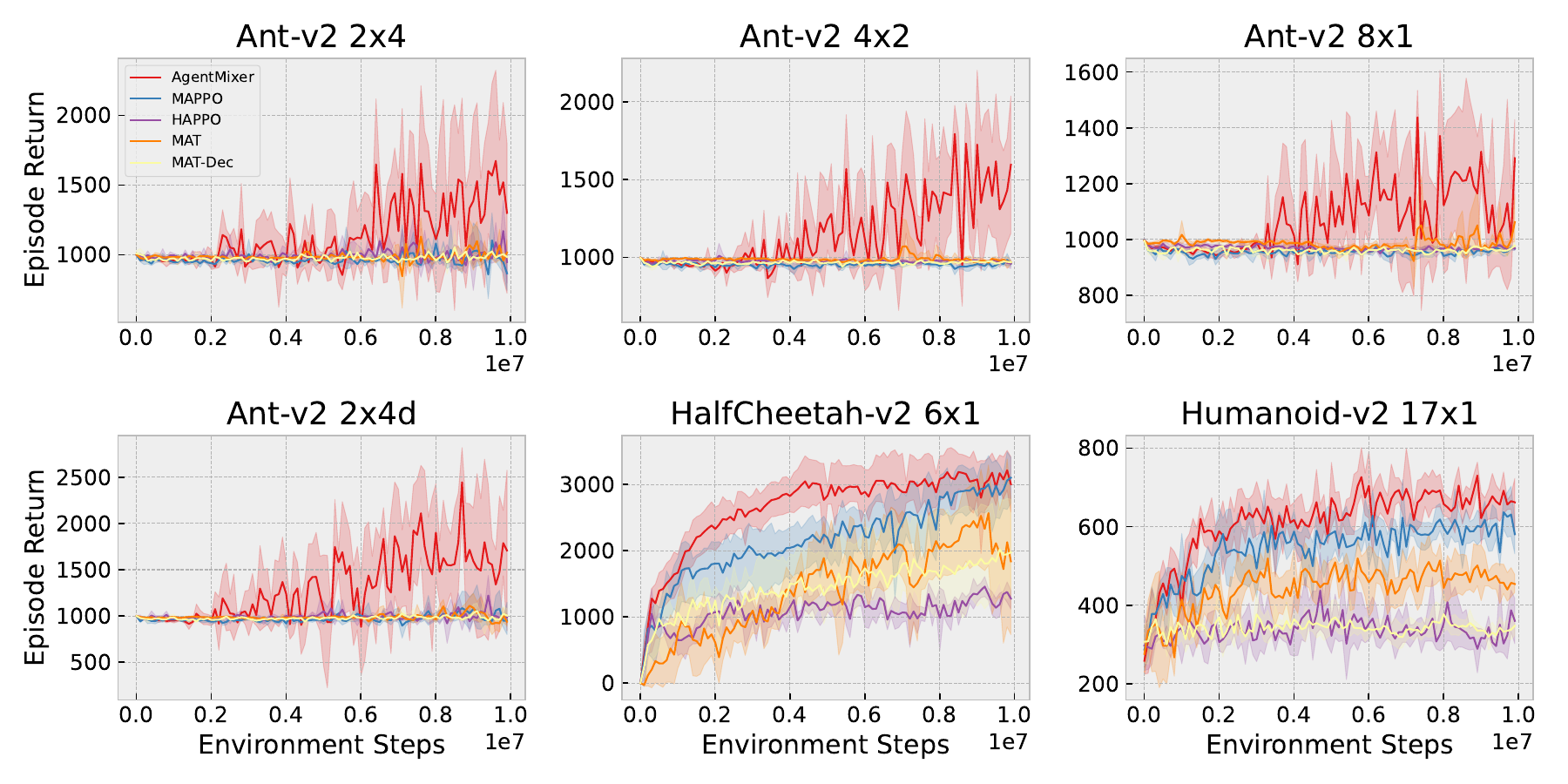} 
\caption{AgentMixer outperforms comparison methods on multiple Multi-Agent MuJoCo tasks. Please, see the statistical significance tests in the Appendix for further evidence. Partial observability in MA-MuJoCo is a critical challenge to most baselines.}
\label{fig:expmujoco}
\end{figure*}

\textbf{Discrete Case: } In this case, we denote the discrete action policy of agent $i$ as a categorical distribution parameterized by probabilities $\alpha_{\phi^i}$:
\begin{equation}
\begin{aligned}
\label{eq:cate_individual}
    \pi_{\phi^i}(a|h^i)=Cat(&\alpha_{\phi^i}(h^i))=\text{softmax}(\alpha_{\phi^i}(h^i)),
\end{aligned}
\end{equation}
where $\sum _{k=1}^{K} \alpha^k_{\phi^i}(h^i) = 1$. The mode of a categorical distribution is the category with the highest frequency. However, it is tricky to promote cooperative exploration while preserving the mode consistency. Fortunately, Gumbel-Softmax distribution \cite{jang2017categorical} provides another perspective, where we explicitly disentangle exploration and mode. Specifically, we define the joint policy as:
\begin{equation}
\label{eq:cate_joint}
\pi_{\theta}=\begin{pmatrix}
\text{softmax}((\epsilon_{\theta}^1 + \text{log}\alpha_{\phi^1})/\tau^1) \\
\vdots \\
\text{softmax}((\epsilon_{\theta}^N + \text{log}\alpha_{\phi^N})/\tau^N) \\
\end{pmatrix}, 
\end{equation}
where $\tau$ is a temperature hyperparameter and $\epsilon_{\theta}$ is sampled using inverse transform sampling by generating $u_{\theta} \in (0,1)$ with sigmoid function and computing $\epsilon_{\theta} = -\text{log}(-\text{log}(u_{\theta}))$.  Note that when the temperature approaches 0,  the joint policy degrades to the collection of individual policies.

\subsection{Convergence of AgentMixer}
Together with PM, we can treat the learning of the correlated joint fully observable policy as a single-agent RL problem where abundant single-agent methods with theoretical guarantees of convergence and performance exist:
\begin{equation}
\begin{aligned}
    \mathcal{J}(\pi_{\theta}) = \mathbb{E}_{s\sim \rho^{\pi_{\theta}},a\sim\pi_{\theta}}\left [ \sum_{t=0}^{\infty}\gamma ^{t}r_t \right ], \text{subject to IGC}, 
\end{aligned}
\label{eq:amo}
\end{equation}
where $\pi_{\theta}=\text{PM}([\pi_{\phi^i}]_{i=1}^N)$.

\begin{restatable}[Convergence of AgentMixer]{theorem}{agentmixer}
\label{thm:agentmixer}
The product partially observable policy generated by AgentMixer is a $\epsilon$-CE.
\end{restatable}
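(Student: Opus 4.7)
The plan is to chain together three ingredients: (i) the Policy Modifier parameterization makes the optimization of $\pi_\theta$ a search over strategy-modified joint policies, so a standard single-agent RL algorithm converges to a joint policy that satisfies the CE condition on the fully observable side; (ii) IGC forces the product partially observable policy $\pi_\phi$ to agree with $\pi_\theta$ in its mode, yielding an approximately identifiable policy pair in the sense of Definition \ref{def:idpp}; (iii) Theorem \ref{thm:cad} then carries the CE property from the joint side to the decentralized side up to a residual error $\epsilon$.

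First I would formalize the CE property of the learned joint policy. By construction, $\text{PM}([\pi_{\phi^i}]_{i=1}^N) = (f^1\diamond \pi^1)\odot\cdots\odot(f^N\diamond \pi^N)$, so varying the PM weights sweeps out arbitrary per-agent strategy modifications while leaving the base product policy fixed. Optimizing \eqref{eq:amo} with any convergent single-agent algorithm (e.g.\ PPO-style ascent on $\mathcal{J}(\pi_\theta)$) therefore converges to some $\pi_{\theta^*}$ at which no alternative $f^i$ strictly improves $V^i$; this is exactly the defining inequality \eqref{eq:ce} of an $\epsilon$-approximate CE for the joint fully observable policy, with $\epsilon$ given by the single-agent optimization suboptimality.

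Next I would use IGC to cross from $\pi_{\theta^*}$ down to the product partially observable policy $\pi_{\phi^*}$. By Definition \ref{def:igc}, the disentanglement construction (shared Gaussian means in the continuous case, Gumbel-Softmax with a shared logit vector in the discrete case) forces the mode of $\pi_{\theta^*}$ to coincide component-wise with the modes of the $\pi_{\phi^{i,*}}$ on every visited $(s,b)$. Consequently $\{\pi_{\theta^*},\pi_{\phi^*}\}$ is an (approximately) identifiable policy pair per Definition \ref{def:idpp}, the residual KL arising only from the decoupled exploration component (the $\sigma_\theta$ or the Gumbel noise $\epsilon_\theta$). Under this approximate identifiability, Theorem \ref{thm:cad} applies: the interleaved update of $\pi_\eta$ via the asymmetric distillation objective \eqref{eq:va} drives $\pi_{\phi^*}$ to the implicit product policy of $\pi_{\theta^*}$, so the value gap $|V^i_{\pi_{\phi^*}} - V^i_{\pi_{\theta^*}}|$ is $O(\epsilon)$ via a performance-difference argument on the residual KL.

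Finally I would combine the two bounds. For every $i$ and every strategy modification $f^i$, chaining
\[
V^i_{\pi_{\phi^{i,*}},\pi_{\phi^{-i,*}}} \;\ge\; V^i_{\pi_{\theta^{i,*}},\pi_{\theta^{-i,*}}} - \epsilon_1 \;\ge\; \max_{f^i} V^i_{(f^i\diamond \pi_{\theta^{i,*}})\odot \pi_{\theta^{-i,*}}} - \epsilon_2 \;\ge\; \max_{f^i} V^i_{(f^i\diamond \pi_{\phi^{i,*}})\odot \pi_{\phi^{-i,*}}} - \epsilon,
\]
where the first and third inequalities use the identifiability-induced value gap and the second uses the CE property of $\pi_{\theta^*}$. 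Collecting the constants yields exactly the $\epsilon$-approximate CE condition of Definition \ref{def:ce} for $\pi_{\phi^*}$.

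The main obstacle I anticipate is the middle step: translating IGC mode-matching into a quantitative bound on the value gap, since IGC pins only the argmax of the distributions, not the full densities. I would handle this by invoking a performance-difference lemma and bounding the total-variation distance between $\pi_\theta$ and $\pi_\phi$ in terms of the exploration components disentangled by the construction in Section on IGC implementation; this is where the $\epsilon$ ultimately comes from. A secondary subtlety is verifying that PM is expressive enough so that the strategy-modification quantifier in Definition \ref{def:ce} is genuinely saturated by the joint optimization, rather than restricted to a strict subset of modifications; I would argue this from the universal-approximation property of the agent- and channel-mixing MLPs in \eqref{eq:mixer}.
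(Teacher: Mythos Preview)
Your route diverges substantially from the paper's and also introduces gaps that the paper's argument avoids.

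The paper's proof does \emph{not} invoke Theorem~\ref{thm:cad} or any performance-difference/value-gap machinery at all. It is a three-line argument: (a) by TRPO monotone improvement plus Bolzano--Weierstrass, the single-agent optimization of \eqref{eq:amo} converges to some $\pi_{\theta^*}$ with $V_{\pi_{\theta^*}}(s)+\epsilon\ge\max_{f_\theta}V_{((f_\theta^1\diamond\pi_{\phi^1}),\ldots,(f_\theta^N\diamond\pi_{\phi^N}))}(s)$, where $\epsilon$ is defined as the suboptimality gap of $\pi_{\theta^*}$; (b) IGC is used as a literal \emph{equality} $((f_{\theta^*}^1\diamond\pi_{\phi^1}),\ldots,(f_{\theta^*}^N\diamond\pi_{\phi^N}))=(\pi_{\phi^1},\ldots,\pi_{\phi^N})$; (c) substituting (b) into (a) gives exactly \eqref{eq:ce}. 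There is no distillation step, no identifiability argument, and no chaining of three $\epsilon$'s.

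The specific gaps in your proposal: First, Theorem~\ref{thm:cad} governs the separate iterative distillation loop \eqref{eq:cad} with a \emph{fixed} optimal teacher $\pi_{\theta^*}$; AgentMixer trains $\pi_\theta$ and $\pi_\phi$ jointly under the IGC constraint, so that theorem is not the operative mechanism here and invoking it is a category error. Second, mode agreement (Definition~\ref{def:igc}) does not by itself yield an ``approximately identifiable'' pair in the sense of Definition~\ref{def:idpp}, which demands vanishing KL---two distributions can share a mode yet have KL of order one, so your step~(ii) does not follow without additional structure. Third, your three-link chain requires the outer terms to be controlled by the \emph{same} residual, but the first link compares values under $\pi_\phi$ and $\pi_\theta$ while the third compares values under $f^i\diamond\pi_\theta$ and $f^i\diamond\pi_\phi$ with the \emph{other} agents' marginals also swapped; these are different perturbations and would need separate bounds. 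The paper sidesteps all of this by treating IGC as forcing the learned modification to be the identity at convergence, so the joint and product policies coincide and no transfer argument is needed.
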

\begin{proof}
For proof see the Appendix.
\end{proof}

With Theorem \ref{thm:agentmixer}, we are ready to present the learning framework of AgentMixer, as illustrated in Figure \ref{fig:agentmixer}, which consists of two main components: \textit{Policy Modifier} and \textit{Individual-Global-Consistency}. Specifically, PM acts as an observer who takes a holistic view and recommends that each agent follow her instructions. IGC then requires the agents to be obligated to follow the recommendations they receive. We provide the pseudo-code for AgentMixer in the Appendix. AgentMixer can benefit from a variety of strong single-agent algorithms, such as PPO \cite{schulman2017proximal} and SAC \cite{haarnoja2019soft}. In this work, our implementation of AgentMixer follows PPO \cite{schulman2017proximal}.

\section{Experiments}

We compare our method AgentMixer with MAPPO \cite{NEURIPS2022_9c1535a0}, HAPPO \cite{DBLP:journals/corr/abs-2109-11251}, MAVEN \cite{NEURIPS2019_f816dc0a}, ARMAPPO \cite{pmlr-v162-fu22d}, MAT \cite{NEURIPS2022_69413f87}, and MAT with decentralized actor MAT-Dec. Note that our method and all comparison methods except MAT and MAT-Dec have access to only local information during evaluation. 
We evaluate on several continuous action space Multi-Agent MuJoCo~\cite{NEURIPS2021_65b9eea6} (\textit{MA-MuJoCo}) benchmark tasks and discrete action space \textit{SMAC-v2} \cite{ellis2022smacv2} benchmark tasks.\footnote{Code: github.com/LiZhYun/BackPropagationThroughAgents}
We include more experimental details and results on Climbing Matrix Game \cite{10.5555/645529.658113} and Predator-Prey \cite{li2020deep} in the Appendix.

\subsection{Continuous Actions Spaces: \textit{MA-MuJoCo}}

\begin{figure*}[ht!]
\centering
\includegraphics[width=0.99\textwidth]{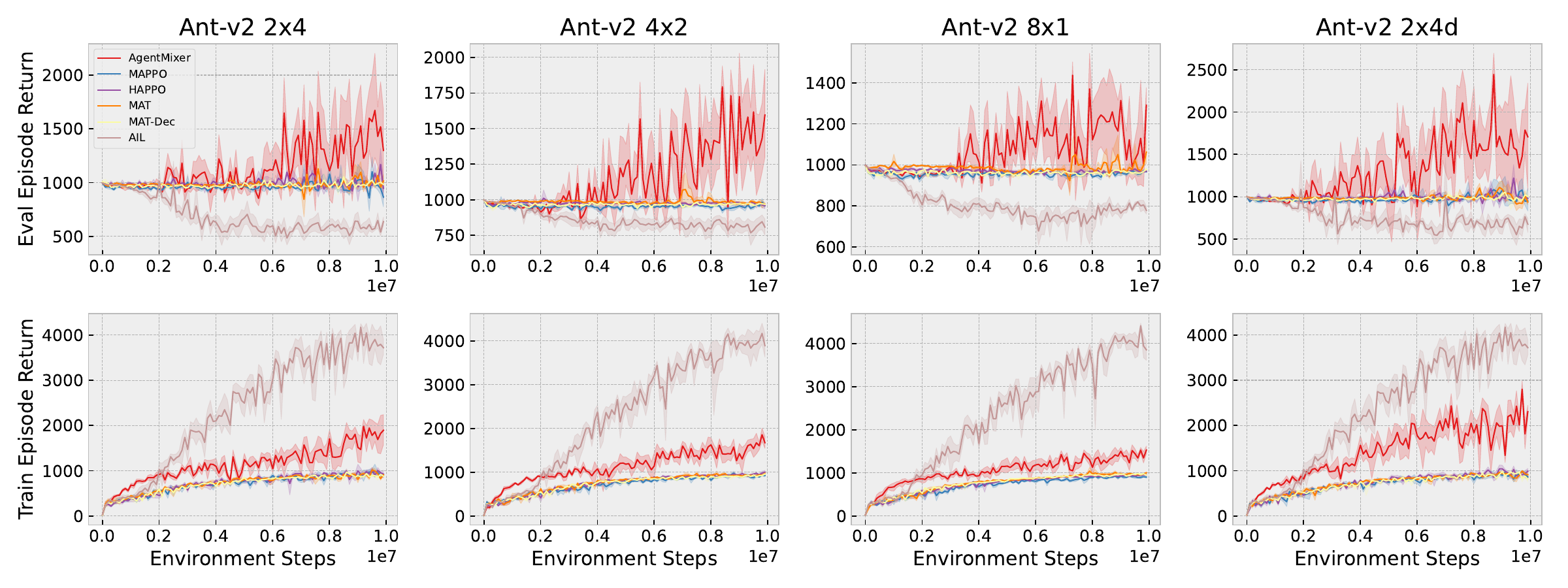} 
\caption{Ablations on Ant-v2. The large performance gap can be seen between training and testing on AIL, which is caused by \textit{asymmetric learning failure}. Other baselines fail to learn any effective policies, while AgentMixer obtains superior performance.}
\label{fig:ablmujoco}
\end{figure*}

\begin{table*}[ht]
    \centering
    \begin{tabular}{ccccccccc}
        \hline
        Map & MAPPO & HAPPO & MAT & MAT-Dec & MAVEN & ARMAPPO & Ours\\
        \hline
        protoss 5v5 & $0.51_{0.10}$ & $0.38_{0.08}$ & $\mathbf{0.55_{0.11}}$ & $0.53_{0.04}$ & $0.49_{0.11}$  & $0.50_{0.09}$  & $\mathbf{0.68_{0.04}}$ \\
        terran 5v5 & $0.48_{0.11}$ & $0.39_{0.12}$ & $\mathbf{0.63_{0.11}}$ & $0.45_{0.06}$ & $0.47_{0.03}$  & $0.41_{0.04}$  & $\mathbf{0.71_{0.10}}$ \\
        zerg 5v5 & $0.48_{0.05}$ & $0.39_{0.05}$ & $0.48_{0.10}$ & $0.51_{0.10}$ & $0.46_{0.08}$  & $0.44_{0.06}$  & $\mathbf{0.66_{0.03}}$ \\
        protoss 10v10 & $0.40_{0.08}$ & $0.13_{0.02}$ & $0.32_{0.09}$ & $\mathbf{0.52_{0.02}}$ & $0.36_{0.05}$  & $0.36_{0.04}$  & $0.47_{0.02}$ \\
        terran 10v10 & $0.29_{0.05}$ & $0.12_{0.09}$ & $\mathbf{0.46_{0.14}}$ & $0.33_{0.03}$ & $0.26_{0.02}$  & $0.20_{0.02}$  & $\mathbf{0.50_{0.11}}$ \\
        zerg 10v10 & $\mathbf{0.38_{0.12}}$ & $0.29_{0.07}$ & $\mathbf{0.52_{0.09}}$ & $0.36_{0.05}$ & $\mathbf{0.46_{0.05}}$  & $0.36_{0.06}$  & $\mathbf{0.52_{0.06}}$ \\
        \hline
    \end{tabular}
    \caption{Median evaluate winning rate and standard deviation on six SMACv2 maps for different methods (steps=1e7). All values within 1 standard deviation of the maximum score rate are marked in bold.}
    \label{tab:expsmac}
\end{table*}

As in the full observation setting, previous methods have shown near-optimal performance in the \textit{MA-MuJoCo} tasks~\cite{DBLP:journals/corr/abs-2109-11251, NEURIPS2022_69413f87}, we instead limit the set of observable elements for each agent to themselves, aiming to satisfy better the partial observability nature in MARL. We show the performance comparison against the baselines in Figure \ref{fig:expmujoco}. We can see that AgentMixer enjoys superior performance over those baselines. The superiority of our method is highlighted especially in Ant-v2 tasks, where partial observability poses a critical challenge as the local observations of each agent (leg) of the ant are quite similar and make it hard to estimate the necessary state information for coordination. In these tasks, while other algorithms, even the \textit{centralized} MAT, fail to learn any meaningful joint policies, AgentMixer outperforms the baselines by a large margin. These results show that AgentMixer can effectively exploit asymmetric information to mitigate the challenges incurred by severe partial observability.

\subsection{Discrete Action Spaces: \textit{SMAC-v2}}

Compared to the StarCraft Multi-Agent Challenge (\textit{SMAC}), we instead evaluate our method on the more challenging \textit{SMAC-v2} benchmark which is designed with higher randomness. As shown in Table \ref{tab:expsmac}, AgentMixer outperforms other strong baselines in 5 out of 6 scenarios and achieves over 50\% win rates in most maps, even over 70\% in the terran 5v5 scenarios. Notably, in the 5v5 scenarios, we observe that AgentMixer improves win rates compared to the other six methods by a large margin, i.e. by 8\% to 32\%.

\subsection{Ablation Results}

To examine the effectiveness of AgentMixer in addressing \textit{asymmetric learning failure}, we perform ablation experiments by adding an imitation learning baseline, asymmetric imitation learning (AIL) \cite{pmlr-v139-warrington21a}, which uses PPO, conditioned on full state information, to supervise learning decentralized policies, conditioned on partial information. As shown in Figure \ref{fig:ablmujoco}, due to \textit{asymmetric learning failure}, AIL performs poorly in evaluation, although it achieves superior performance in training. In contrast, AgentMixer couples the learning of the centralized policy and decentralized policies such that partially observed policies can perform consistently with the fully observed policy.

\section{Conclusion and Future Work}
To achieve coordination among partially observable agents, this paper presents a novel framework named AgentMixer which enables \textit{correlated policy factorization} and provably converges to $\epsilon$-approximate Correlated Equilibrium. AgentMixer consists of two key components: 1) the \textit{Policy Modifier} that takes all the initial decisions from individual agents and composes them into a correlated joint policy based on the full state information; 2) the \textit{Individual-Global-Consistency} which mitigates the \textit{asymmetric learning failure} by preserving the consistency between individual and joint policy. Surprisingly, IGC and IGM can be considered as parallel works of policy gradient-based and value-based methods respectively. We will study the transformation between IGC and IGM in future work. We extensively evaluate the proposed method in Multi-Agent MuJoCo, SMAC-v2, Matrix Game, and Predator-Prey. The experiments demonstrate that our method outperforms strong baselines in most tasks and achieves comparable performance in the rest.

\section*{Acknowledgments}
This work was supported by the Research Council of Finland Flagship programme: Finnish Center for Artificial Intelligence FCAI.

\bibliography{aaai25}

\begin{thebibliography}{47}
\providecommand{\natexlab}[1]{#1}

\bibitem[{Chen et~al.(2022)Chen, Guo, Du, Fang, Zhang, Zhang, and Yu}]{10.1007/978-3-030-94662-3_12}
Chen, L.; Guo, H.; Du, Y.; Fang, F.; Zhang, H.; Zhang, W.; and Yu, Y. 2022.
\newblock Signal Instructed Coordination in Cooperative Multi-agent Reinforcement Learning.
\newblock In Chen, J.; Lang, J.; Amato, C.; and Zhao, D., eds., \emph{Distributed Artificial Intelligence}, 185--205. Cham: Springer International Publishing.
\newblock ISBN 978-3-030-94662-3.

\bibitem[{Drummond and Tom(2011)}]{https://doi.org/10.1111/j.1476-5381.2011.01577.x}
Drummond, G.~B.; and Tom, B.~D. 2011.
\newblock Statistics, probability, significance, likelihood: words mean what we define them to mean.
\newblock \emph{British Journal of Pharmacology}, 164(6): 1573--1576.

\bibitem[{Ellis et~al.(2022)Ellis, Moalla, Samvelyan, Sun, Mahajan, Foerster, and Whiteson}]{ellis2022smacv2}
Ellis, B.; Moalla, S.; Samvelyan, M.; Sun, M.; Mahajan, A.; Foerster, J.~N.; and Whiteson, S. 2022.
\newblock SMACv2: An Improved Benchmark for Cooperative Multi-Agent Reinforcement Learning.
\newblock arXiv:2212.07489.

\bibitem[{Feinberg and Shwartz(2012)}]{feinberg2012handbook}
Feinberg, E.; and Shwartz, A. 2012.
\newblock \emph{Handbook of Markov Decision Processes: Methods and Applications}.
\newblock International Series in Operations Research \& Management Science. Springer US.
\newblock ISBN 9781461508052.

\bibitem[{Fu et~al.(2022)Fu, Yu, Xu, Yang, and Wu}]{pmlr-v162-fu22d}
Fu, W.; Yu, C.; Xu, Z.; Yang, J.; and Wu, Y. 2022.
\newblock Revisiting Some Common Practices in Cooperative Multi-Agent Reinforcement Learning.
\newblock In Chaudhuri, K.; Jegelka, S.; Song, L.; Szepesvari, C.; Niu, G.; and Sabato, S., eds., \emph{Proceedings of the 39th International Conference on Machine Learning}, volume 162 of \emph{Proceedings of Machine Learning Research}, 6863--6877. PMLR.

\bibitem[{Greenwald and Hall(2003)}]{10.5555/3041838.3041869}
Greenwald, A.; and Hall, K. 2003.
\newblock Correlated-Q Learning.
\newblock In \emph{Proceedings of the Twentieth International Conference on International Conference on Machine Learning}, ICML'03, 242–249. AAAI Press.
\newblock ISBN 1577351894.

\bibitem[{Gu et~al.(2023)Gu, {Grudzien Kuba}, Chen, Du, Yang, Knoll, and Yang}]{GU2023103905}
Gu, S.; {Grudzien Kuba}, J.; Chen, Y.; Du, Y.; Yang, L.; Knoll, A.; and Yang, Y. 2023.
\newblock Safe multi-agent reinforcement learning for multi-robot control.
\newblock \emph{Artificial Intelligence}, 319: 103905.

\bibitem[{Haarnoja et~al.(2019)Haarnoja, Zhou, Hartikainen, Tucker, Ha, Tan, Kumar, Zhu, Gupta, Abbeel, and Levine}]{haarnoja2019soft}
Haarnoja, T.; Zhou, A.; Hartikainen, K.; Tucker, G.; Ha, S.; Tan, J.; Kumar, V.; Zhu, H.; Gupta, A.; Abbeel, P.; and Levine, S. 2019.
\newblock Soft Actor-Critic Algorithms and Applications.
\newblock arXiv:1812.05905.

\bibitem[{Hong, Jin, and Tang(2022)}]{NEURIPS2022_d112fdd3}
Hong, Y.; Jin, Y.; and Tang, Y. 2022.
\newblock Rethinking Individual Global Max in Cooperative Multi-Agent Reinforcement Learning.
\newblock arXiv:2209.09640.

\bibitem[{Hu and Wellman(2003)}]{10.5555/945365.964288}
Hu, J.; and Wellman, M.~P. 2003.
\newblock Nash Q-Learning for General-Sum Stochastic Games.
\newblock \emph{J. Mach. Learn. Res.}, 4(null): 1039–1069.

\bibitem[{Jang, Gu, and Poole(2017)}]{jang2017categorical}
Jang, E.; Gu, S.; and Poole, B. 2017.
\newblock Categorical Reparameterization with Gumbel-Softmax.
\newblock In \emph{International Conference on Learning Representations}.

\bibitem[{Kuba et~al.(2021)Kuba, Chen, Wen, Wen, Sun, Wang, and Yang}]{DBLP:journals/corr/abs-2109-11251}
Kuba, J.~G.; Chen, R.; Wen, M.; Wen, Y.; Sun, F.; Wang, J.; and Yang, Y. 2021.
\newblock Trust Region Policy Optimisation in Multi-Agent Reinforcement Learning.
\newblock \emph{CoRR}, abs/2109.11251.

\bibitem[{Lauer and Riedmiller(2000)}]{10.5555/645529.658113}
Lauer, M.; and Riedmiller, M.~A. 2000.
\newblock An Algorithm for Distributed Reinforcement Learning in Cooperative Multi-Agent Systems.
\newblock In \emph{Proceedings of the Seventeenth International Conference on Machine Learning}, ICML '00, 535–542. San Francisco, CA, USA: Morgan Kaufmann Publishers Inc.
\newblock ISBN 1558607072.

\bibitem[{Li et~al.(2023)Li, Liu, Zhang, Wei, Niu, Yang, Liu, and Ouyang}]{Li_Liu_Zhang_Wei_Niu_Yang_Liu_Ouyang_2023}
Li, C.; Liu, J.; Zhang, Y.; Wei, Y.; Niu, Y.; Yang, Y.; Liu, Y.; and Ouyang, W. 2023.
\newblock ACE: Cooperative Multi-Agent Q-learning with Bidirectional Action-Dependency.
\newblock \emph{Proceedings of the AAAI Conference on Artificial Intelligence}, 37(7): 8536--8544.

\bibitem[{Li et~al.(2020)Li, Gupta, Morales, Allen, and Kochenderfer}]{li2020deep}
Li, S.; Gupta, J.~K.; Morales, P.; Allen, R.; and Kochenderfer, M.~J. 2020.
\newblock Deep implicit coordination graphs for multi-agent reinforcement learning.
\newblock \emph{arXiv preprint arXiv:2006.11438}.

\bibitem[{Li et~al.(2024{\natexlab{a}})Li, Wu, Su, Wu, Jing, Wu, Duan, Yue, Tong, and Han}]{LI2024106101}
Li, Z.; Wu, L.; Su, K.; Wu, W.; Jing, Y.; Wu, T.; Duan, W.; Yue, X.; Tong, X.; and Han, Y. 2024{\natexlab{a}}.
\newblock Coordination as inference in multi-agent reinforcement learning.
\newblock \emph{Neural Networks}, 172: 106101.

\bibitem[{Li et~al.(2024{\natexlab{b}})Li, Zhao, Wu, and Pajarinen}]{Li_Zhao_Wu_Pajarinen_2024}
Li, Z.; Zhao, W.; Wu, L.; and Pajarinen, J. 2024{\natexlab{b}}.
\newblock Backpropagation Through Agents.
\newblock \emph{Proceedings of the AAAI Conference on Artificial Intelligence}, 38(12): 13718--13726.

\bibitem[{Littman(1994)}]{LITTMAN1994157}
Littman, M.~L. 1994.
\newblock Markov games as a framework for multi-agent reinforcement learning.
\newblock In Cohen, W.~W.; and Hirsh, H., eds., \emph{Machine Learning Proceedings 1994}, 157--163. San Francisco (CA): Morgan Kaufmann.
\newblock ISBN 978-1-55860-335-6.

\bibitem[{Lowe et~al.(2017)Lowe, WU, Tamar, Harb, Pieter~Abbeel, and Mordatch}]{NIPS2017_68a97503}
Lowe, R.; WU, Y.; Tamar, A.; Harb, J.; Pieter~Abbeel, O.; and Mordatch, I. 2017.
\newblock Multi-Agent Actor-Critic for Mixed Cooperative-Competitive Environments.
\newblock In Guyon, I.; Luxburg, U.~V.; Bengio, S.; Wallach, H.; Fergus, R.; Vishwanathan, S.; and Garnett, R., eds., \emph{Advances in Neural Information Processing Systems}, volume~30. Curran Associates, Inc.

\bibitem[{Ma and Wu(2020)}]{DBLP:conf/atal/MaW20}
Ma, J.; and Wu, F. 2020.
\newblock Feudal Multi-Agent Deep Reinforcement Learning for Traffic Signal Control.
\newblock In Seghrouchni, A. E.~F.; Sukthankar, G.; An, B.; and Yorke{-}Smith, N., eds., \emph{Proceedings of the 19th International Conference on Autonomous Agents and Multiagent Systems, {AAMAS} '20, Auckland, New Zealand, May 9-13, 2020}, 816--824. International Foundation for Autonomous Agents and Multiagent Systems.

\bibitem[{Mahajan et~al.(2019)Mahajan, Rashid, Samvelyan, and Whiteson}]{NEURIPS2019_f816dc0a}
Mahajan, A.; Rashid, T.; Samvelyan, M.; and Whiteson, S. 2019.
\newblock MAVEN: Multi-Agent Variational Exploration.
\newblock In Wallach, H.; Larochelle, H.; Beygelzimer, A.; d\textquotesingle Alch\'{e}-Buc, F.; Fox, E.; and Garnett, R., eds., \emph{Advances in Neural Information Processing Systems}, volume~32. Curran Associates, Inc.

\bibitem[{Maschler, Zamir, and Solan(2013)}]{maschler2013game}
Maschler, M.; Zamir, S.; and Solan, E. 2013.
\newblock \emph{Game Theory}.
\newblock Cambridge University Press.
\newblock ISBN 9781107005488.

\bibitem[{Oliehoek and Amato(2016)}]{10.5555/2967142}
Oliehoek, F.~A.; and Amato, C. 2016.
\newblock \emph{A Concise Introduction to Decentralized POMDPs}.
\newblock Springer Publishing Company, Incorporated, 1st edition.
\newblock ISBN 3319289276.

\bibitem[{Pajarinen, Hottinen, and Peltonen(2014)}]{6482133}
Pajarinen, J.; Hottinen, A.; and Peltonen, J. 2014.
\newblock Optimizing Spatial and Temporal Reuse in Wireless Networks by Decentralized Partially Observable Markov Decision Processes.
\newblock \emph{IEEE Transactions on Mobile Computing}, 13(4): 866--879.

\bibitem[{Peng et~al.(2021)Peng, Rashid, Schroeder~de Witt, Kamienny, Torr, Boehmer, and Whiteson}]{NEURIPS2021_65b9eea6}
Peng, B.; Rashid, T.; Schroeder~de Witt, C.; Kamienny, P.-A.; Torr, P.; Boehmer, W.; and Whiteson, S. 2021.
\newblock FACMAC: Factored Multi-Agent Centralised Policy Gradients.
\newblock In Ranzato, M.; Beygelzimer, A.; Dauphin, Y.; Liang, P.; and Vaughan, J.~W., eds., \emph{Advances in Neural Information Processing Systems}, volume~34, 12208--12221. Curran Associates, Inc.

\bibitem[{Phan et~al.(2023)Phan, Ritz, Altmann, Zorn, N\"{u}\ss{}lein, K\"{o}lle, Gabor, and Linnhoff-Popien}]{10.5555/3618408.3619565}
Phan, T.; Ritz, F.; Altmann, P.; Zorn, M.; N\"{u}\ss{}lein, J.; K\"{o}lle, M.; Gabor, T.; and Linnhoff-Popien, C. 2023.
\newblock Attention-Based Recurrence for Multi-Agent Reinforcement Learning under Stochastic Partial Observability.
\newblock In \emph{Proceedings of the 40th International Conference on Machine Learning}, ICML'23. JMLR.org.

\bibitem[{Rashid et~al.(2020)Rashid, Samvelyan, De~Witt, Farquhar, Foerster, and Whiteson}]{10.5555/3455716.3455894}
Rashid, T.; Samvelyan, M.; De~Witt, C.~S.; Farquhar, G.; Foerster, J.; and Whiteson, S. 2020.
\newblock Monotonic Value Function Factorisation for Deep Multi-Agent Reinforcement Learning.
\newblock \emph{J. Mach. Learn. Res.}, 21(1).

\bibitem[{Schroeder~de Witt et~al.(2019)Schroeder~de Witt, Foerster, Farquhar, Torr, Boehmer, and Whiteson}]{NEURIPS2019_f968fdc8}
Schroeder~de Witt, C.; Foerster, J.; Farquhar, G.; Torr, P.; Boehmer, W.; and Whiteson, S. 2019.
\newblock Multi-Agent Common Knowledge Reinforcement Learning.
\newblock In Wallach, H.; Larochelle, H.; Beygelzimer, A.; d\textquotesingle Alch\'{e}-Buc, F.; Fox, E.; and Garnett, R., eds., \emph{Advances in Neural Information Processing Systems}, volume~32. Curran Associates, Inc.

\bibitem[{Schulman et~al.(2015)Schulman, Levine, Abbeel, Jordan, and Moritz}]{pmlr-v37-schulman15}
Schulman, J.; Levine, S.; Abbeel, P.; Jordan, M.; and Moritz, P. 2015.
\newblock Trust Region Policy Optimization.
\newblock In Bach, F.; and Blei, D., eds., \emph{Proceedings of the 32nd International Conference on Machine Learning}, volume~37 of \emph{Proceedings of Machine Learning Research}, 1889--1897. Lille, France: PMLR.

\bibitem[{Schulman et~al.(2017{\natexlab{a}})Schulman, Wolski, Dhariwal, Radford, and Klimov}]{schulman2017proximal}
Schulman, J.; Wolski, F.; Dhariwal, P.; Radford, A.; and Klimov, O. 2017{\natexlab{a}}.
\newblock Proximal Policy Optimization Algorithms.
\newblock arXiv:1707.06347.

\bibitem[{Schulman et~al.(2017{\natexlab{b}})Schulman, Wolski, Dhariwal, Radford, and Klimov}]{DBLP:journals/corr/SchulmanWDRK17}
Schulman, J.; Wolski, F.; Dhariwal, P.; Radford, A.; and Klimov, O. 2017{\natexlab{b}}.
\newblock Proximal Policy Optimization Algorithms.
\newblock \emph{CoRR}, abs/1707.06347.

\bibitem[{Shalev{-}Shwartz, Shammah, and Shashua(2016)}]{DBLP:journals/corr/Shalev-ShwartzS16a}
Shalev{-}Shwartz, S.; Shammah, S.; and Shashua, A. 2016.
\newblock Safe, Multi-Agent, Reinforcement Learning for Autonomous Driving.
\newblock \emph{CoRR}, abs/1610.03295.

\bibitem[{Sheng et~al.(2023)Sheng, Li, Jin, Zha, Wang, and Wang}]{sheng2023negotiated}
Sheng, J.; Li, W.; Jin, B.; Zha, H.; Wang, J.; and Wang, X. 2023.
\newblock Negotiated Reasoning: On Provably Addressing Relative Over-Generalization.
\newblock arXiv:2306.05353.

\bibitem[{Son et~al.(2019)Son, Kim, Kang, Hostallero, and Yi}]{pmlr-v97-son19a}
Son, K.; Kim, D.; Kang, W.~J.; Hostallero, D.~E.; and Yi, Y. 2019.
\newblock {QTRAN}: Learning to Factorize with Transformation for Cooperative Multi-Agent Reinforcement Learning.
\newblock In Chaudhuri, K.; and Salakhutdinov, R., eds., \emph{Proceedings of the 36th International Conference on Machine Learning}, volume~97 of \emph{Proceedings of Machine Learning Research}, 5887--5896. PMLR.

\bibitem[{Sunehag et~al.(2017)Sunehag, Lever, Gruslys, Czarnecki, Zambaldi, Jaderberg, Lanctot, Sonnerat, Leibo, Tuyls, and Graepel}]{sunehag2017valuedecomposition}
Sunehag, P.; Lever, G.; Gruslys, A.; Czarnecki, W.~M.; Zambaldi, V.; Jaderberg, M.; Lanctot, M.; Sonnerat, N.; Leibo, J.~Z.; Tuyls, K.; and Graepel, T. 2017.
\newblock Value-Decomposition Networks For Cooperative Multi-Agent Learning.
\newblock arXiv:1706.05296.

\bibitem[{Tolstikhin et~al.(2021)Tolstikhin, Houlsby, Kolesnikov, Beyer, Zhai, Unterthiner, Yung, Steiner, Keysers, Uszkoreit, Lucic, and Dosovitskiy}]{NEURIPS2021_cba0a4ee}
Tolstikhin, I.~O.; Houlsby, N.; Kolesnikov, A.; Beyer, L.; Zhai, X.; Unterthiner, T.; Yung, J.; Steiner, A.; Keysers, D.; Uszkoreit, J.; Lucic, M.; and Dosovitskiy, A. 2021.
\newblock MLP-Mixer: An all-MLP Architecture for Vision.
\newblock In Ranzato, M.; Beygelzimer, A.; Dauphin, Y.; Liang, P.; and Vaughan, J.~W., eds., \emph{Advances in Neural Information Processing Systems}, volume~34, 24261--24272. Curran Associates, Inc.

\bibitem[{Vaswani et~al.(2017)Vaswani, Shazeer, Parmar, Uszkoreit, Jones, Gomez, Kaiser, and Polosukhin}]{NIPS2017_3f5ee243}
Vaswani, A.; Shazeer, N.; Parmar, N.; Uszkoreit, J.; Jones, L.; Gomez, A.~N.; Kaiser, L.~u.; and Polosukhin, I. 2017.
\newblock Attention is All you Need.
\newblock In Guyon, I.; Luxburg, U.~V.; Bengio, S.; Wallach, H.; Fergus, R.; Vishwanathan, S.; and Garnett, R., eds., \emph{Advances in Neural Information Processing Systems}, volume~30. Curran Associates, Inc.

\bibitem[{Walsman et~al.(2023)Walsman, Zhang, Choudhury, Fox, and Farhadi}]{walsman2023impossibly}
Walsman, A.; Zhang, M.; Choudhury, S.; Fox, D.; and Farhadi, A. 2023.
\newblock Impossibly Good Experts and How to Follow Them.
\newblock In \emph{The Eleventh International Conference on Learning Representations}.

\bibitem[{Wang et~al.(2021{\natexlab{a}})Wang, Ren, Liu, Yu, and Zhang}]{wang2021qplex}
Wang, J.; Ren, Z.; Liu, T.; Yu, Y.; and Zhang, C. 2021{\natexlab{a}}.
\newblock {\{}QPLEX{\}}: Duplex Dueling Multi-Agent Q-Learning.
\newblock In \emph{International Conference on Learning Representations}.

\bibitem[{Wang, Ye, and Lu(2023)}]{wang2023more}
Wang, J.; Ye, D.; and Lu, Z. 2023.
\newblock More Centralized Training, Still Decentralized Execution: Multi-Agent Conditional Policy Factorization.
\newblock In \emph{The Eleventh International Conference on Learning Representations}.

\bibitem[{Wang et~al.(2021{\natexlab{b}})Wang, Han, Wang, Dong, and Zhang}]{wang2021dop}
Wang, Y.; Han, B.; Wang, T.; Dong, H.; and Zhang, C. 2021{\natexlab{b}}.
\newblock {\{}DOP{\}}: Off-Policy Multi-Agent Decomposed Policy Gradients.
\newblock In \emph{International Conference on Learning Representations}.

\bibitem[{Warrington et~al.(2021)Warrington, Lavington, Scibior, Schmidt, and Wood}]{pmlr-v139-warrington21a}
Warrington, A.; Lavington, J.~W.; Scibior, A.; Schmidt, M.; and Wood, F. 2021.
\newblock Robust Asymmetric Learning in POMDPs.
\newblock In Meila, M.; and Zhang, T., eds., \emph{Proceedings of the 38th International Conference on Machine Learning}, volume 139 of \emph{Proceedings of Machine Learning Research}, 11013--11023. PMLR.

\bibitem[{Wen et~al.(2022)Wen, Kuba, Lin, Zhang, Wen, Wang, and Yang}]{NEURIPS2022_69413f87}
Wen, M.; Kuba, J.; Lin, R.; Zhang, W.; Wen, Y.; Wang, J.; and Yang, Y. 2022.
\newblock Multi-Agent Reinforcement Learning is a Sequence Modeling Problem.
\newblock In Koyejo, S.; Mohamed, S.; Agarwal, A.; Belgrave, D.; Cho, K.; and Oh, A., eds., \emph{Advances in Neural Information Processing Systems}, volume~35, 16509--16521. Curran Associates, Inc.

\bibitem[{Ye et~al.(2022)Ye, Li, Wang, and Zhang}]{ye2023global}
Ye, J.; Li, C.; Wang, J.; and Zhang, C. 2022.
\newblock Towards global optimality in cooperative marl with sequential transformation.
\newblock \emph{arXiv preprint arXiv:2207.11143}.

\bibitem[{Yu et~al.(2022)Yu, Velu, Vinitsky, Gao, Wang, Bayen, and WU}]{NEURIPS2022_9c1535a0}
Yu, C.; Velu, A.; Vinitsky, E.; Gao, J.; Wang, Y.; Bayen, A.; and WU, Y. 2022.
\newblock The Surprising Effectiveness of PPO in Cooperative Multi-Agent Games.
\newblock In Koyejo, S.; Mohamed, S.; Agarwal, A.; Belgrave, D.; Cho, K.; and Oh, A., eds., \emph{Advances in Neural Information Processing Systems}, volume~35, 24611--24624. Curran Associates, Inc.

\bibitem[{Zhang et~al.(2021)Zhang, Li, Wang, Xie, and Lu}]{pmlr-v139-zhang21m}
Zhang, T.; Li, Y.; Wang, C.; Xie, G.; and Lu, Z. 2021.
\newblock FOP: Factorizing Optimal Joint Policy of Maximum-Entropy Multi-Agent Reinforcement Learning.
\newblock In Meila, M.; and Zhang, T., eds., \emph{Proceedings of the 38th International Conference on Machine Learning}, volume 139 of \emph{Proceedings of Machine Learning Research}, 12491--12500. PMLR.

\bibitem[{Zhao et~al.(2024)Zhao, Zhao, Li, Kannala, and Pajarinen}]{pmlr-v235-zhao24v}
Zhao, W.; Zhao, Y.; Li, Z.; Kannala, J.; and Pajarinen, J. 2024.
\newblock Optimistic Multi-Agent Policy Gradient.
\newblock In Salakhutdinov, R.; Kolter, Z.; Heller, K.; Weller, A.; Oliver, N.; Scarlett, J.; and Berkenkamp, F., eds., \emph{Proceedings of the 41st International Conference on Machine Learning}, volume 235 of \emph{Proceedings of Machine Learning Research}, 61186--61202. PMLR.

\end{thebibliography}

\newpage
\clearpage
\appendix
\section{Detail structure of Policy Modifier}
Figure \ref{fig:mlpmixer} depicts the macro-structure of \textit{Policy Modifier}. It accepts the state and policies of agents as input. Specifically, \textit{Policy Modifier} two MLP blocks. The first one is the \textit{agent-mixing MLPs}: it acts on columns of input. The second one is the \textit{channel-mixing MLP}: it acts on rows of the output of \textit{agent-mixing MLPs}.
\label{app:agentmixer}
\begin{figure*}[ht!]
\centering
\includegraphics[scale=0.4]{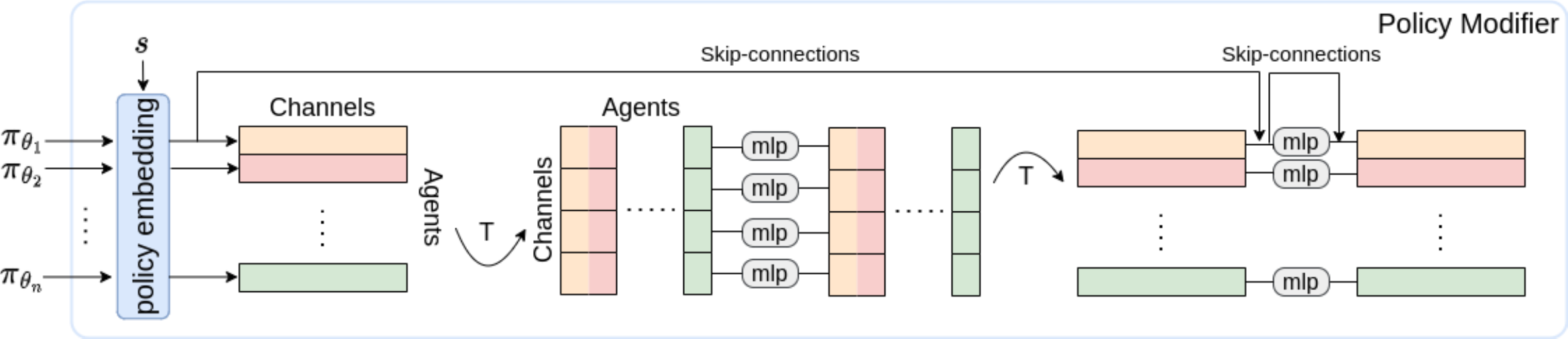} 
\caption{Policy Modifier consists of policy embedding layer, agent-mixing MLP and channel-mixing MLP.}
\label{fig:mlpmixer}
\end{figure*}

\section{Additional Proofs}
\label{app:proofs}
\subsection{Proof of Lemma \ref{lm:ad}}
\begin{lemma}[Asymmetric distillation solution]
\label{lm:ad}
For any correlated joint fully observable policy $\pi_{\theta}$ and fixed product partially observable behavioral policy $\pi_{\psi}$, the implicit product policy $\hat{\pi}_{\theta}^{\psi}$, defined in Definition \ref{def:ipp}, minimizes the asymmetric distillation objective:
\begin{equation}
\begin{aligned}
\label{eq:ad}
\hat{\pi}_{\theta}^{\psi}=\argmin_{\pi}\mathbb{E}_{\rho^{\pi_{\psi }}(s,b)}\left [ \KL\left ( \pi_{\theta}(a|s)\parallel \pi(a|b) \right ) \right ].
\end{aligned}
\end{equation}
\end{lemma}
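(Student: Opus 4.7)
\textbf{Proof proposal for Lemma \ref{lm:ad}.} The plan is to treat the objective as a functional of $\pi(\,\cdot\mid b)$ and reduce it to pointwise minimization at every belief state $b$ in the support of $\rho^{\pi_\psi}(b)$. First I would expand the KL divergence and discard the term that does not depend on $\pi$, reducing the problem to minimizing the cross-entropy
\begin{equation*}
\mathcal{L}(\pi) = -\,\mathbb{E}_{\rho^{\pi_\psi}(s,b)}\!\left[\sum_{a}\pi_\theta(a\mid s)\log\pi(a\mid b)\right].
\end{equation*}

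Next I would factor the joint occupancy as $\rho^{\pi_\psi}(s,b)=\rho^{\pi_\psi}(b)\,\rho^{\pi_\psi}(s\mid b)$, swap the order of the integrals over $s$ and the sum over $a$ (Fubini, since the integrand is nonnegative after a sign flip or the policies are bounded), and pull $\log\pi(a\mid b)$ outside the inner integral over $s$. This exactly produces $\hat{\pi}_\theta^\psi(a\mid b)=\int_s\rho^{\pi_\psi}(s\mid b)\pi_\theta(a\mid s)\,ds$ in place of the inner average, yielding
\begin{equation*}
\mathcal{L}(\pi) = -\int_b \rho^{\pi_\psi}(b)\sum_a \hat{\pi}_\theta^\psi(a\mid b)\log \pi(a\mid b)\,db.
\end{equation*}

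At this stage the problem decouples across $b$: for each $b$ with $\rho^{\pi_\psi}(b)>0$, I minimize $-\sum_a \hat{\pi}_\theta^\psi(a\mid b)\log\pi(a\mid b)$ over distributions on $\mathcal{A}$. Since $\hat{\pi}_\theta^\psi(\,\cdot\mid b)$ is itself a valid probability distribution (its entries are nonnegative and sum to one by Fubini applied to $\sum_a\pi_\theta(a\mid s)=1$), Gibbs' inequality — equivalently adding and subtracting the entropy of $\hat{\pi}_\theta^\psi(\,\cdot\mid b)$ to rewrite the objective as $H(\hat{\pi}_\theta^\psi(\,\cdot\mid b))+\KL(\hat{\pi}_\theta^\psi(\,\cdot\mid b)\,\|\,\pi(\,\cdot\mid b))$ — shows that the unique minimizer is $\pi(\,\cdot\mid b)=\hat{\pi}_\theta^\psi(\,\cdot\mid b)$. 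On belief states of measure zero under $\rho^{\pi_\psi}$ the objective is insensitive to $\pi$, so one may take the same assignment there without loss of generality. Integrating over $b$ concludes that $\hat{\pi}_\theta^\psi$ attains the infimum, proving the claim.

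I do not expect any serious obstacle: the argument is essentially the standard characterization of cross-entropy minimization as KL minimization. The only subtlety to handle carefully is the marginalization/Fubini step, in particular verifying that $\hat{\pi}_\theta^\psi(\,\cdot\mid b)$ is a well-defined probability distribution and that the minimization can indeed be carried out pointwise in $b$; these are routine given that $\pi_\theta$ and $\rho^{\pi_\psi}(s\mid b)$ are both honest conditional distributions.
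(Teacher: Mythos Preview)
Your proposal is correct and follows essentially the same approach as the paper: expand the KL, drop the $\pi$-independent entropy term, push the $s$-integral inside using $\rho^{\pi_\psi}(s,b)=\rho^{\pi_\psi}(b)\rho^{\pi_\psi}(s\mid b)$ to produce $\hat{\pi}_\theta^\psi$, and then recognize the resulting cross-entropy as $\KL(\hat{\pi}_\theta^\psi\parallel\pi)$ plus a constant. Your version is slightly more careful about the Fubini step and the measure-zero belief states, but there is no substantive difference in strategy.
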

\begin{proof}
Expanding the right-hand side:
\begin{equation}
\label{eq:adproof}
\begin{aligned}
&\argmin_{\pi}\mathbb{E}_{\rho^{\pi_{\psi }}(s,b)}\left [ \KL\left ( \pi_{\theta}(a|s)\parallel \pi(a|b) \right ) \right ] \\
=&\argmin_{\pi}\mathbb{E}_{\rho^{\pi_{\psi }}(b)}\left [  \int_{s}\int _{a}\pi_{\theta}(a|s)\log(\frac{\pi_{\theta}(a|s)}{\pi(a|b)})da\rho^{\pi_{\psi }}(s|b)ds \right ], \\
=&\argmin_{\pi}\mathbb{E}_{\rho^{\pi_{\psi }}(b)}\left [ \int_{s}\mathcal{H} (\pi_{\theta}(\cdot|s))\rho^{\pi_{\psi }}(s|b)ds \right ] - \\
&\mathbb{E}_{\rho^{\pi_{\psi }}(b)}\left [ \int_{s}\int _{a} \pi_{\theta}(a|s)\log(\pi(a|b)) da\rho^{\pi_{\psi }}(s|b)ds \right ],\\
&\textnormal{where} \mathcal{H}(\cdot ) \textnormal{is the entropy function},\\
=&\argmin_{\pi}\textnormal{const}-\mathbb{E}_{\rho^{\pi_{\psi }}(b)}\left [ \int _{a} \hat{\pi}_{\theta}^{\psi}(a|b)\log(\pi(a|b)) da \right ],\\
&\textnormal{note that we are free to set the const,}\\
&\textnormal{so long as it remains independent of} \pi,\\
=&\argmin_{\pi}\mathbb{E}_{\rho^{\pi_{\psi }}(b)}\Big [ \int _{a} \hat{\pi}_{\theta}^{\psi}(a|b)\log(\hat{\pi}_{\theta}^{\psi}(a|b)) da- \\
&\int _{a} \hat{\pi}_{\theta}^{\psi}(a|b)\log(\pi(a|b)) da \Big ],\\
=&\argmin_{\pi}\mathbb{E}_{\rho^{\pi_{\psi }}(b)}\left [ \KL\left ( \hat{\pi}_{\theta}^{\psi}(a|b)\parallel \pi(a|b) \right ) \right ].
\end{aligned}
\end{equation}
Hence we conclude the proof.
\end{proof}

\subsection{Proof of Convergence of Iterative Variational Approximation}
We first introduce an assumption which simply states that the variational family is sufficiently expressive such that the implicit product policy can be recovered, and the implicit product policy is sufficiently expressive such that the optimal product partially observable policy can actually be found.
\begin{assumption}[Sufficiency of Variational Representations]
\label{asm:va}
We assume that for any product behavioral policy, $\pi_{\psi}$, the variational family is sufficiently expressive such that any implicit product policy, $\hat{\pi}_{\theta}$, can be exactly recovered under the occupancy induced by the product behavioral policy. We also assume that there is an implicit product policy, $\hat{\pi}_{\theta}$, such that an optimal product partially observable policy can be represented, and thus there is a variational implicit product policy that can represent the optimal product partially observable policy under $\rho^{\pi_{\psi}}(b)$.
\end{assumption}
We then introduce the lemma which shows that the solution to an iterative procedure actually converges to the solution of a single equivalent “static” optimization problem. This lemma allows us to solve the challenging optimization using a simple iterative procedure.
\begin{lemma}[Convergence of Iterative Variational Approximation]
\label{lm:va}
Given the implicit product policy $\hat{\pi}_{\theta}$ and the corresponding variational approximation to $\hat{\pi}_{\theta}$, $\pi_{\eta}$, then under Assumption \ref{asm:va}, the iterative procedure:
\begin{equation}
\begin{aligned}
\label{eq:civa}
\eta_{k+1}=\argmin_{\eta}\mathbb{E}_{\rho^{\pi_{\eta_{k} }}(b)}\left [ \KL\left ( \hat{\pi}_{\theta}(a|b)\parallel \pi_{\eta}(a|b) \right ) \right ], 
\end{aligned}
\end{equation}
converges to the solution to the optimization problem with $k \to \infty$:
\begin{equation}
\begin{aligned}
\label{eq:sop}
\eta^{*}=\argmin_{\eta}\mathbb{E}_{\rho^{\pi_{\eta }}(b)}\left [ \KL\left ( \hat{\pi}_{\theta}(a|b)\parallel \pi_{\eta}(a|b) \right ) \right ].
\end{aligned}
\end{equation}
\end{lemma}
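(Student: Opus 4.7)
The plan is to exploit Assumption \ref{asm:va} to collapse the iteration into a simple fixed-point argument. First I would observe that because the variational family is sufficiently expressive, the inner minimization in \ref{eq:civa} attains value zero: there is some $\eta_{k+1}$ for which $\pi_{\eta_{k+1}}(a\mid b) = \hat{\pi}_{\theta}(a\mid b)$ on the support of $\rho^{\pi_{\eta_k}}(b)$. This is a direct consequence of the non-negativity of $\KL$ and the fact that the $\KL$ vanishes if and only if the two distributions coincide almost everywhere with respect to the outer expectation.

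Next I would characterize the minimizer of the static problem \ref{eq:sop} and the fixed points of \ref{eq:civa} and show they coincide. By Assumption \ref{asm:va}, the static objective \ref{eq:sop} also admits value zero: any $\eta^{*}$ for which $\pi_{\eta^{*}}(a\mid b) = \hat{\pi}_{\theta}(a\mid b)$ on the support of $\rho^{\pi_{\eta^{*}}}(b)$ is a global minimizer. A fixed point $\eta^{\infty}$ of the iteration satisfies exactly the same condition, since $\eta^{\infty} = \eta^{\infty}$ requires $\pi_{\eta^{\infty}}(a\mid b) = \hat{\pi}_{\theta}(a\mid b)$ on the support of $\rho^{\pi_{\eta^{\infty}}}(b)$. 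Hence solutions to \ref{eq:sop} and fixed points of \ref{eq:civa} form the same set.

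To obtain convergence, I would argue that the occupancy stabilizes after essentially one step. Since $\pi_{\eta_{k+1}}$ agrees with $\hat{\pi}_{\theta}$ on the support of $\rho^{\pi_{\eta_k}}(b)$, the trajectory distribution under $\pi_{\eta_{k+1}}$ starting from $\rho_0$ coincides, step by step, with that under $\hat{\pi}_{\theta}$ (any potential off-support disagreement occurs on a measure-zero set and contributes nothing to the occupancy). Therefore $\rho^{\pi_{\eta_{k+1}}} = \rho^{\hat{\pi}_{\theta}}$ for $k \geq 0$, and on this stable support the next iterate again matches $\hat{\pi}_{\theta}$. Thus $\pi_{\eta_k}$ is a fixed point of the iteration from iteration one onward (where the notion of equality is with respect to the induced occupancy), which by the previous paragraph is a solution of \ref{eq:sop}. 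Passing to $k \to \infty$ yields $\eta_k \to \eta^{*}$ in the relevant sense.

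The main obstacle will be the delicate handling of what the iterates look like off the support of their own occupancies: the variational optimization only pins down $\pi_{\eta_{k+1}}$ on the support of $\rho^{\pi_{\eta_k}}(b)$, and in principle off-support behavior could shift the next iterate's occupancy. I expect this to be resolved by (i) always selecting the expressive representative that agrees with $\hat{\pi}_{\theta}$ globally whenever this is consistent with the variational family, or equivalently (ii) noting that the induced trajectory distribution depends on the policy only through values on the reachable support, so any off-support discrepancy is irrelevant for both the occupancies and the objective being compared. Once this measure-theoretic nuance is dispatched, the stabilization argument above yields the convergence claim.
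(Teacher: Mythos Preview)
Your argument hinges on the claim that the occupancy stabilizes after essentially one iteration: because $\pi_{\eta_{k+1}}$ agrees with $\hat{\pi}_{\theta}$ on the support of $\rho^{\pi_{\eta_k}}(b)$, you assert that the full trajectory distribution under $\pi_{\eta_{k+1}}$ coincides with that under $\hat{\pi}_{\theta}$. This step is not justified. When you roll out $\pi_{\eta_{k+1}}$ from $\rho_0$, the belief at time $0$ is indeed in $\mathrm{supp}(\rho^{\pi_{\eta_k}})$, so the time-$0$ action distribution matches $\hat{\pi}_{\theta}$. But the belief reached at time $1$ is distributed according to $\hat{\pi}_{\theta}$'s dynamics, and there is no reason this belief must lie in $\mathrm{supp}(\rho^{\pi_{\eta_k}})$: if $\pi_{\eta_k}$ chose very different actions at $b_0$, the set of next beliefs it visits can be disjoint from those visited under $\hat{\pi}_{\theta}$. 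From that point on $\pi_{\eta_{k+1}}$ is unconstrained, and its occupancy need not equal $\rho^{\hat{\pi}_{\theta}}$. Your proposed fixes do not close this gap: option (i), selecting a representative that agrees with $\hat{\pi}_{\theta}$ globally, demands more than Assumption \ref{asm:va} (which only guarantees recovery \emph{under} the given occupancy), and option (ii) is circular, since the reachable support of $\pi_{\eta_{k+1}}$ is determined by $\pi_{\eta_{k+1}}$ itself, not by $\rho^{\pi_{\eta_k}}$.

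The paper's proof avoids this pitfall by not attempting one-step convergence. Instead it argues inductively that each iteration extends agreement by one timestep: after $k$ iterations, $\pi_{\eta_k}$ and $\pi_{\eta^*}$ induce identical marginal belief distributions $P(b_t=b\mid\cdot)$ for $t=0,\ldots,k$. Only the tail $\sum_{t>k}\gamma^t(\cdot)$ of the discounted occupancy can differ, and this is bounded by $C\gamma^{k+1}/(1-\gamma)$, which drives the total variation $D_{\mathrm{TV}}(\rho^{\pi_{\eta^*}}\!\parallel\!\rho^{\pi_{\eta_k}})$ to zero as $k\to\infty$. The discount factor $\gamma<1$ is thus doing real work in the paper's argument, whereas your sketch never invokes it.
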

\begin{proof}
We begin by expressing the total variation between $\rho^{\pi_{\eta^*}}(b)$ and $\rho^{\pi_{\eta^k}}(b)$ at the $k^{\text{th}}$ iteration:
\begin{equation}
\begin{aligned}
\label{eq:sop1}
&D_{\text{TV}}(\rho^{\pi_{\eta^*}}(b)\parallel \rho^{\pi_{\eta^k}}(b))=\\
&\sup_{b}\left| {\textstyle\sum}_{t=0}^{\infty}\gamma^tP(b_t=b|\pi_{\eta^*}) - {\textstyle\sum}_{t=0}^{\infty}\gamma^tP(b_t=b|\pi_{\eta^k}) 
\right|, \\
=&\sup_{b}|{\textstyle\sum}_{t=0}^{k}\gamma^tP(b_t=b|\pi_{\eta^*}) + {\textstyle\sum}_{t=k+1}^{\infty}\gamma^tP(b_t=b|\pi_{\eta^*}) \\
& - {\textstyle\sum}_{t=0}^{k}\gamma^tP(b_t=b|\pi_{\eta^k}) - {\textstyle\sum}_{t=k+1}^{\infty}\gamma^tP(b_t=b|\pi_{\eta^k})
|.
\end{aligned}
\end{equation}
We can then note that at the $k^{\text{th}}$ iteration, the marginal belief distributions induced by $\pi_{\eta^*}$ and $\pi_{\eta^k}$ over the first $k$ iteration must be identical as the underlying dynamics are the same at the initial state and belief state and we have exactly minimized the $\KL(\pi_{\eta^*}\parallel\pi_{\eta})$. With the assumption that the maximum variation between the densities is bounded by $C$, we have:
\begin{equation}
\begin{aligned}
\label{eq:sop2}
&\sup_{b}|{\textstyle\sum}_{t=0}^{k}\gamma^tP(b_t=b|\pi_{\eta^*}) + {\textstyle\sum}_{t=k+1}^{\infty}\gamma^tP(b_t=b|\pi_{\eta^*}) \\
& - {\textstyle\sum}_{t=0}^{k}\gamma^tP(b_t=b|\pi_{\eta^k}) - {\textstyle\sum}_{t=k+1}^{\infty}\gamma^tP(b_t=b|\pi_{\eta^k})
|, \\
=&\sup_{b}|{\textstyle\sum}_{t=k+1}^{\infty}\gamma^t(P(b_t=b|\pi_{\eta^*})-P(b_t=b|\pi_{\eta^k}))|, \\
\leq&\sup_{b}|{\textstyle\sum}_{t=k+1}^{\infty}\gamma^tC|, \\
=&C(\frac{1}{1-\gamma}-\frac{1-\gamma^{k+1}}{1-\gamma}), \\
=&C\frac{\gamma^{k+1}}{1-\gamma}=O(\gamma^{k}).
\end{aligned}
\end{equation}
Hence, as $\gamma \in [0,1)$, the total variation between $\pi_{\eta^*}$ and $\pi_{\eta^k}$ converges to zero as $k \to \infty$. With this result and the expressiveness assumption, we complete the proof.
\end{proof}
Similar to the proof of Lemma \ref{lm:va}, we can derive the following result:
\begin{equation}
\begin{aligned}
\label{eq:subsitute}
\argmin_{\eta}\mathbb{E}_{\rho^{\pi_{\psi }}(b)}\left [ \KL\left ( \hat{\pi}_{\theta}(a|b)\parallel \pi_{\eta}(a|b) \right ) \right ]=\\
\argmin_{\eta}\mathbb{E}_{\rho^{\hat{\pi}_{\theta }}(b)}\left [ \KL\left ( \hat{\pi}_{\theta}(a|b)\parallel \pi_{\eta}(a|b) \right ) \right ].
\end{aligned}
\end{equation}
This result allows us to exchange the distribution under which we take expectations.

\subsection{Proof of Theorem \ref{thm:cad}}
With Assumption \ref{asm:va}, Lemma \ref{lm:va}, and the identifiability condition, we are ready to verify the convergence of asymmetric distillation.
\cad*
\begin{proof}
We begin by considering the limiting behavior as $k \to \infty$:
\begin{equation}
\label{eq:cadproof}
\begin{aligned}
\eta^{*}=&\lim_{k\to \infty}\argmin_{\eta}\mathbb{E}_{\rho^{\pi_{\eta_{k} }}(s,b)}\left [ \KL\left ( \pi_{\theta^*}(a|s)\parallel \pi_{\eta}(a|b) \right ) \right ], \\
=&\lim_{k\to \infty}\argmin_{\eta}\mathbb{E}_{\rho^{\pi_{\eta_{k} }}(b)}\left [ \KL\left ( \hat{\pi}_{\theta^*}(a|b)\parallel \pi_{\eta}(a|b) \right ) \right ], 
\\
=&\argmin_{\eta}\mathbb{E}_{\rho^{\pi_{\eta }}(b)}\left [ \KL\left ( \hat{\pi}_{\theta^*}(a|b)\parallel \pi_{\eta}(a|b) \right ) \right ], 
\\
=&\argmin_{\eta}\mathbb{E}_{\rho^{\hat{\pi}_{\theta^* }}(b)}\left [ \KL\left ( \hat{\pi}_{\theta^*}(a|b)\parallel \pi_{\eta}(a|b) \right ) \right ], 
\\
=&\argmin_{\eta}\mathbb{E}_{\rho^{\pi_{\phi^* }}(b)}\left [ \KL\left ( \pi_{\phi^* }(a|b)\parallel \pi_{\eta}(a|b) \right ) \right ]. 
\\
\end{aligned}
\end{equation}
Finally, under Assumption \ref{asm:va}, the expected KL divergence can be exactly zero, which completes the proof. 
\end{proof}

\subsection{Proof of Theorem \ref{thm:agentmixer}}
\agentmixer*
\begin{proof}
Since PM modifies the decentralized policies and generates the correlated joint policy, i.e., $\pi_{\theta}=((f_{\theta}^1 \diamond \pi_{\phi^1}), \cdots, (f_{\theta}^N \diamond \pi_{\phi^N}))$, the RL procedure mentioned in \ref{eq:amo} can be regarded as a single-agent RL problem. By leveraging Theorem 1 in TRPO \cite{pmlr-v37-schulman15}, we can conclude that a sequence $(\pi_{\theta_k})_{k=1}^{\infty}$ of joint policies updated by \ref{eq:amo} has the monotonic improvement property, i.e., $\mathcal{J}(\pi_{\theta_{k+1}})\geq\mathcal{J}(\pi_{\theta_{k}})$. According to Bolzano-Weierstrass Theorem, the sequence of policies $(\pi_{\theta_k})_{k=1}^{\infty}$ exists at least one sub-optimal point $\pi_{\theta^*}$. Let $\pi_{\bar{\theta}}$ be the optimal joint policy and $\epsilon=V_{\pi_{\bar{\theta}}}(s)-V_{\pi_{\theta^*}}(s) \geq 0$. Given the value function defined in \ref{eq:value}, we have:
\begin{equation}
\label{eq:agentmixerce1}
\begin{aligned}
V_{\pi_{\theta^*}}(s) + \epsilon \geq V_{\pi_{\theta}}(s), \forall \pi_{\theta}.
\end{aligned}
\end{equation}
Since optimizing $\pi_{\theta}$ is actually optimizing $f_{\theta}$, then we can obtain:
\begin{equation}
\label{eq:agentmixerce2}
\begin{aligned}
V_{((f_{\theta^*}^1 \diamond \pi_{\phi^1}), \cdots, (f_{\theta^*}^N \diamond \pi_{\phi^N}))}(s) + \epsilon \geq \max_{f_{\theta}}V_{((f_{\theta}^1 \diamond \pi_{\phi^1}), \cdots, (f_{\theta}^N \diamond \pi_{\phi^N}))}(s).
\end{aligned}
\end{equation}
Applying IGC, which keeps the mode consistency between joint policy and product policy, yields:
\begin{equation}
\label{eq:agentmixerce3}
\begin{aligned}
((f_{\theta^*}^1 \diamond \pi_{\phi^1}), \cdots, (f_{\theta^*}^N \diamond \pi_{\phi^N})) = (\pi_{\phi^1}, \cdots, \pi_{\phi^N}).
\end{aligned}
\end{equation}
Finally, by plugging \ref{eq:agentmixerce3} into \ref{eq:agentmixerce2}:
\begin{equation}
\label{eq:agentmixerce4}
\begin{aligned}
V_{(\pi_{\phi^1}, \cdots, \pi_{\phi^N})}(s) \geq \max_{f_{\theta}}V_{((f_{\theta}^1 \diamond \pi_{\phi^1}), \cdots, (f_{\theta}^N \diamond \pi_{\phi^N}))}(s) - \epsilon.
\end{aligned}
\end{equation}
which is exactly the $\epsilon$-CE defined in Definition \ref{def:ce}.
\end{proof}

\section{Relationship Between IGC and IGM}
\label{app:igc_igm}
IGM affirms the optimal action consistency between the global and local policies, which can be represented as follows \cite{NEURIPS2022_d112fdd3}:
\begin{equation}
\label{eq:igm}
 \left\{\begin{matrix}
\underset{a}{\argmax}Q_{tot}(s,a)=\begin{pmatrix}
\argmax_{a^1}Q_{1}(s,a^1) \\
\vdots \\
\argmax_{a^N}Q_{N}(s,a^N) \\
\end{pmatrix} & \\
\begin{pmatrix}
\argmax_{a^1}Q_{1}(s,a^1) \\
\vdots \\
\argmax_{a^N}Q_{N}(s,a^N) \\
\end{pmatrix}=\begin{pmatrix}
\argmax_{a^1}q_{1}(h^1,a^1) \\
\vdots \\
\argmax_{a^N}q_{N}(h^N,a^N) \\
\end{pmatrix}. &  \\
\end{matrix}\right.
\end{equation}
where $Q_{tot}$ is the joint action value function based on the global state, $Q_{i}$ and $q_i$ are the local action value functions based on the global state and the local history respectively.

IGC and IGM share a lot of similarities as they both pursue the consistency of global and local policy. Both of them attempt to convert state-dependent policy into observation-dependent policy.
 
On the other hand, one of the differences is that IGM is imposed on joint action value function and local action value, while IGC explicitly maintains mode consistency between joint policy and individual policies. Moreover, Eq. \ref{eq:igm} indicates that IGM directly maps state-based action value into observation-based action value without considering the issue of mismatch between global state and local observation. We can infer the same result as in previous work \cite{10.5555/3618408.3619565} based on Definition \ref{def:ipp}, that is, the observation-based action value is the average of the state-based action value under the conditional occupancy. On the contrary, IGC modifies the joint policy according to the individual policies, such that the modified joint policy and corresponding implicit product policy pair form an identifiable policy pair under partial observation. Consequently, this modification ensures that the joint policy provides consistent guidance.

\section{Pseudo-code for AgentMixer}
The pseudo-code of our method is shown in Algorithm~\ref{app:algo}.
\label{app:algo}
\begin{algorithm}
 \caption{AgentMixer}
 \begin{algorithmic}  
 \STATE INITIALIZE Decentralized partially observable policies $\{\pi_{\phi^{1}},\dots,\pi_{\phi^{N}}\}$, a single agent algorithm A.\\
 //Construct the joint policy:\\
\STATE $\pi_{\theta}=\text{PM}([\pi_{\phi^i}]_{i=1}^N)$, subjected to IGC.\\
\STATE Run A on $\pi_{\theta}$.
\STATE RETURN $\{\pi_{\phi^{1}},\dots,\pi_{\phi^{N}}\}$.

 \end{algorithmic}
 \end{algorithm}


\section{Baselines and More Experiments}
\label{app:more}
We compare our method with the baselines below including both algorithms with state-of-the-art performance and methods designed specifically to tackle the coordination problems.

    \textbf{MAPPO}~\cite{NEURIPS2022_9c1535a0} applies PPO~\cite{DBLP:journals/corr/SchulmanWDRK17} to multi-agent settings and utilizes CTDE to learn critics based on the global state in order to stabilize the policy gradient estimation. Although with simple techniques, MAPPO has achieved tremendous empirical success in various multi-agent domains and can be a strong baseline for our method.
    
    \textbf{HAPPO}~\cite{DBLP:journals/corr/abs-2109-11251} performs sequential policy updates by utilizing other agents' newest policy under the CTDE framework and provably obtains the monotonic policy improvement guarantee as in single-agent PPO. 
    
    \textbf{MAT (oracle) \& MAT-Dec}~\cite{NEURIPS2022_69413f87} model the multi-agent decision process as a sequence-to-sequence generation problem with powerful transformer architecture~\cite{NIPS2017_3f5ee243}. Note that MAT is not a comparison method, but we have it as some performance upper bound in the experiments. \textit{MAT-Dec} is the decentralized version of \textit{MAT} which relaxes the restriction of using other agents' actions but remains taking the full observations from other agents. Therefore, we remind that \textit{MAT \& MAT-Dec} uses full state information in experiments while our method and other baselines are limited by partial observation.
    
    \textbf{MAVEN}~\cite{NEURIPS2019_f816dc0a} is proposed to improve the exploration of QMIX by introducing a latent space for hierarchical control. Compared to QMIX, MAVEN takes further advantage of CTDE through a \textit{committed exploration} strategy. 
    
    \textbf{ARMAPPO}~\cite{pmlr-v162-fu22d} extends PPO into multi-agent settings and explicitly introduces auto-regressive dependency among agents.

    \textbf{AIL}~\cite{pmlr-v139-warrington21a} naively distills partially observable agents' policies from the fully observable centralized policy. Although showing significant training performance, it suffers from the \textit{asymmetric learning failure} problem and fails to perform well during execution with partial observation. 

    We summarize the different CTDE settings in Table~\ref{tab:obs_setting}. Note that although MAT and MAT-Dec sometimes show better performance than other methods, they take the full state information even during execution.

\begin{table*}[ht]
    \centering
    \begin{tabular}{c|cccccccc}
        \hline
        Algorithm & MAPPO & HAPPO & MAT & MAT-Dec & MAVEN & ARMAPPO & AIL & Ours\\
        \hline
        P. Ob. (execution) & \cmark  & \cmark & \xmark & \xmark & \cmark & \cmark & \cmark & \cmark\\
        C. Value (training) & \cmark  & \cmark & \cmark & \cmark & \cmark & \cmark & \cmark & \cmark\\
        C. Policy (training) & \xmark & \xmark & \xmark & \xmark & \xmark & \xmark & \cmark & \cmark\\
        \hline
    \end{tabular}
    \caption{This table compares different settings of CTDE used in baselines with our methods, where \textit{P. Ob.} denotes \textit{partial observation}, \textit{C. Value} means \textit{centralized value} and \textit{C. Policy} represents \textit{centralized policy}. Note that only methods with \textit{partial observation} during execution are fair for comparison. Although all the methods take advantage of CTDE by using a \textit{centralized value} during training, only our method and AIL further employ \textit{centralized policy} for training. Our method tackles the \textit{asymmetric learning failure} problem in AIL and hence shows better performance.}
    \label{tab:obs_setting}
\end{table*}

\subsection{More Experiments on \textit{MA-MuJoCo}}
Inspired by mujoco tasks in the single-agent RL realm, \textit{MA-MuJoCo} splits the joints of robots into different agents to enable decentralized control for MARL research. \textit{MA-MuJoCo} allows different observation settings by changing the parameter of \textit{obsk} which controls the number of neighbor joints each agent can observe.

We show the experiment results on more \textit{MA-MuJoCo} tasks in Figure~\ref{fig:expmujoco-app}. Extended ablation study results are shown in Figure~\ref{fig:expmujoco-abl-app}.

\begin{figure*}[htbp!]
\centering
\includegraphics[width=0.9\textwidth]{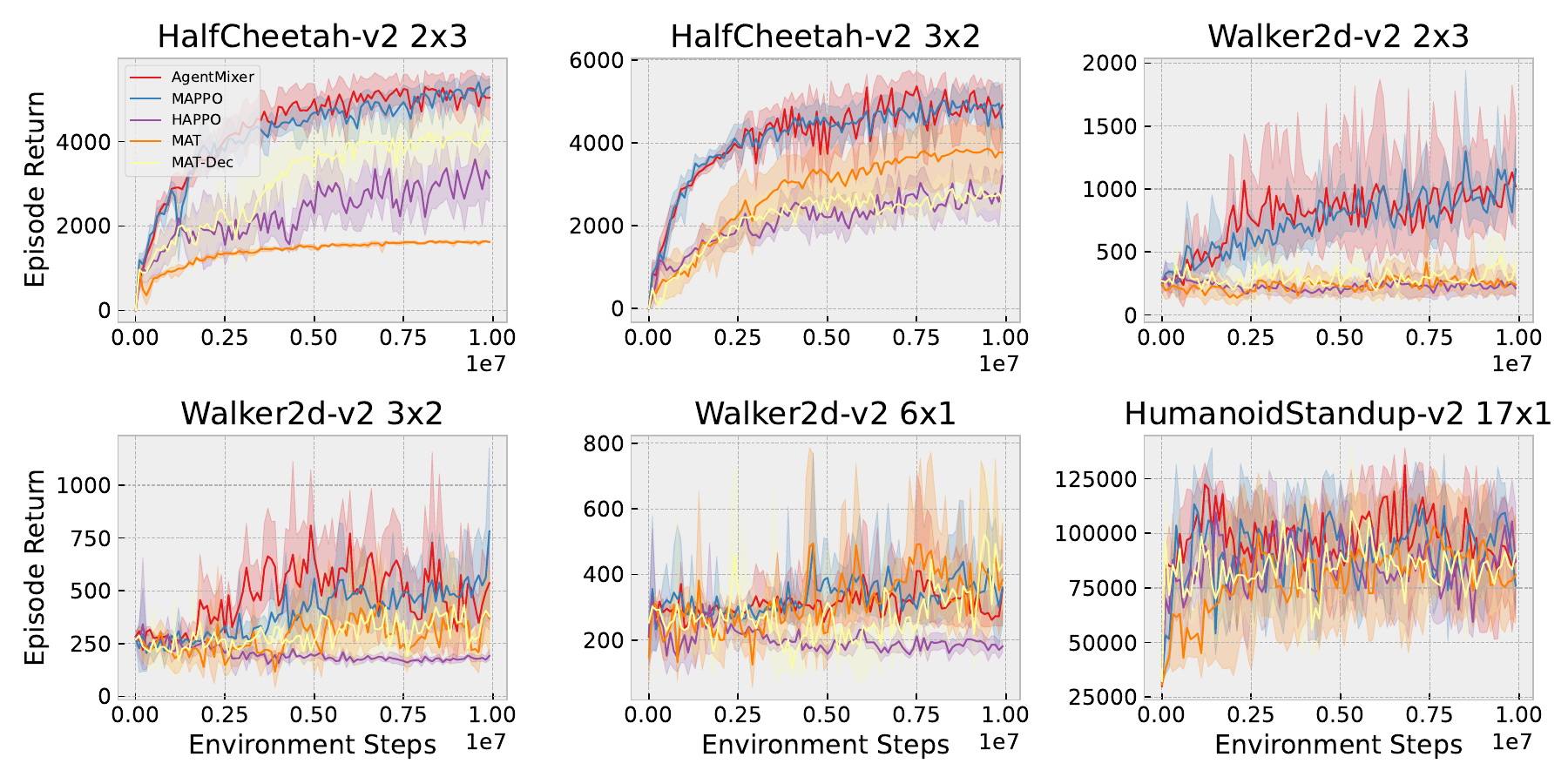} 
\caption{Performance comparison on multiple Multi-Agent MuJoCo tasks.}
\label{fig:expmujoco-app}
\end{figure*}

\begin{figure*}[htbp!]
\centering
\includegraphics[width=0.9\textwidth]{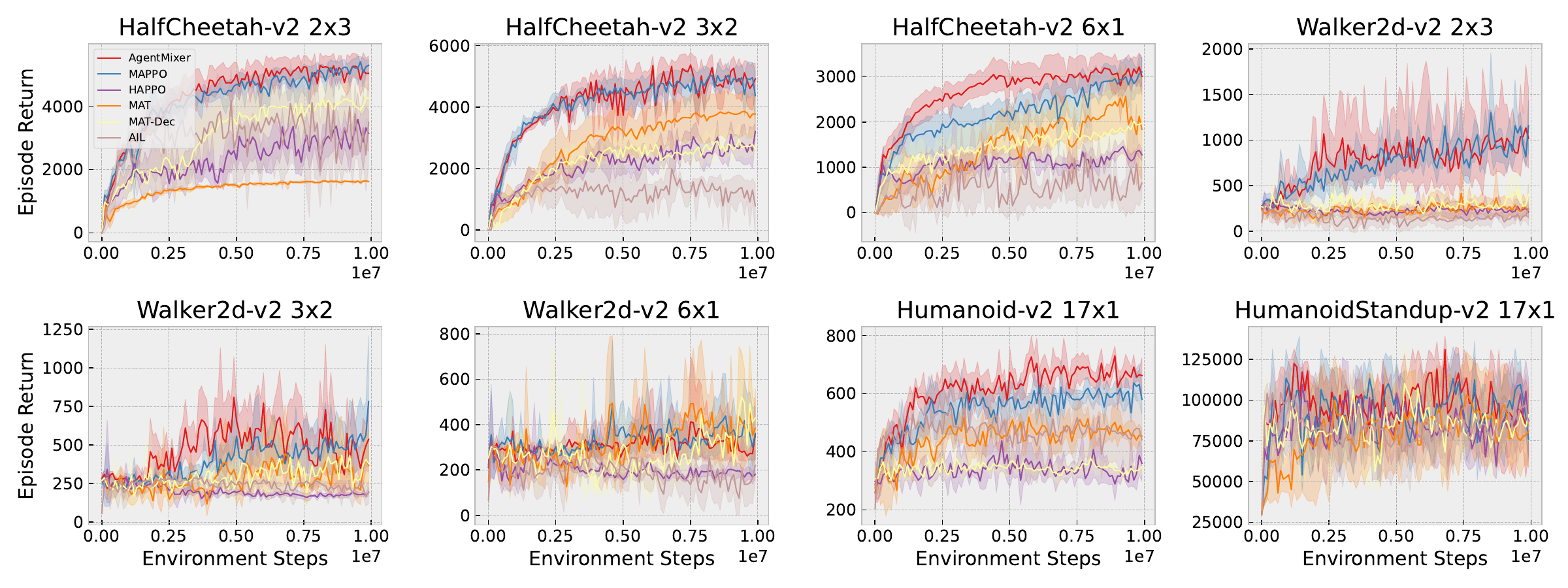} 
\caption{Ablations on multiple Multi-Agent MuJoCo tasks.}
\label{fig:expmujoco-abl-app}
\end{figure*}

\begin{figure*}[htbp!]
\centering
\includegraphics[width=0.9\textwidth]{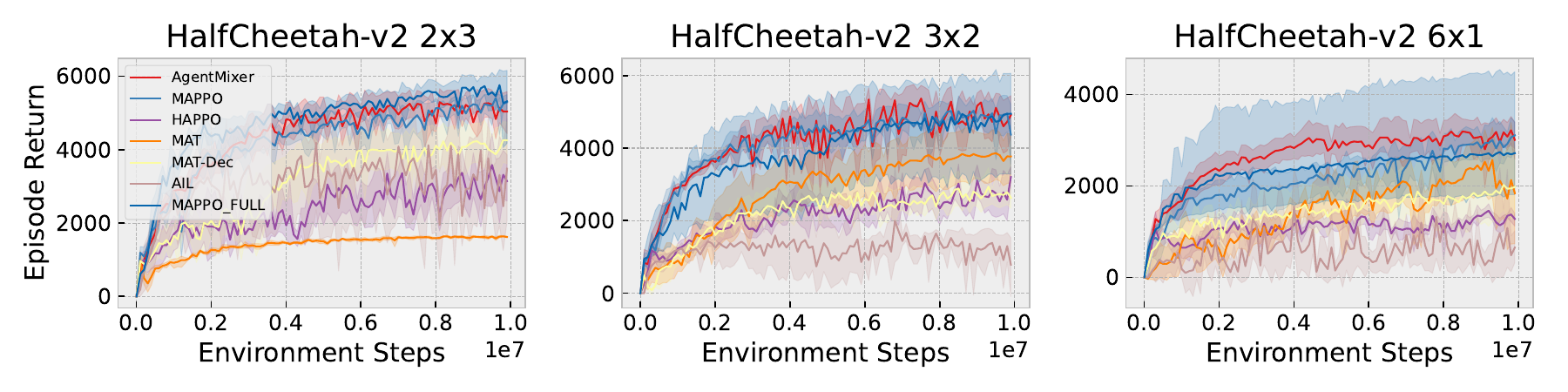} 
\caption{Ablations on multiple Multi-Agent MuJoCo tasks.}
\label{fig:expmujoco-abl-app1}
\end{figure*}

\subsection{More Experiments on SMAC-v2}
The original StarCraft Multi-Agent Challenge (\textit{SMAC}) has been shown to be not difficult enough, as an open-loop policy conditioned only on the timestep can achieve non-trivial win rates for many scenarios. To address these shortcomings, a new benchmark, SMACv2 was proposed to address SMAC’s lack of stochasticity.
\begin{figure*}[htbp!]
\centering
\includegraphics[width=0.9\textwidth]{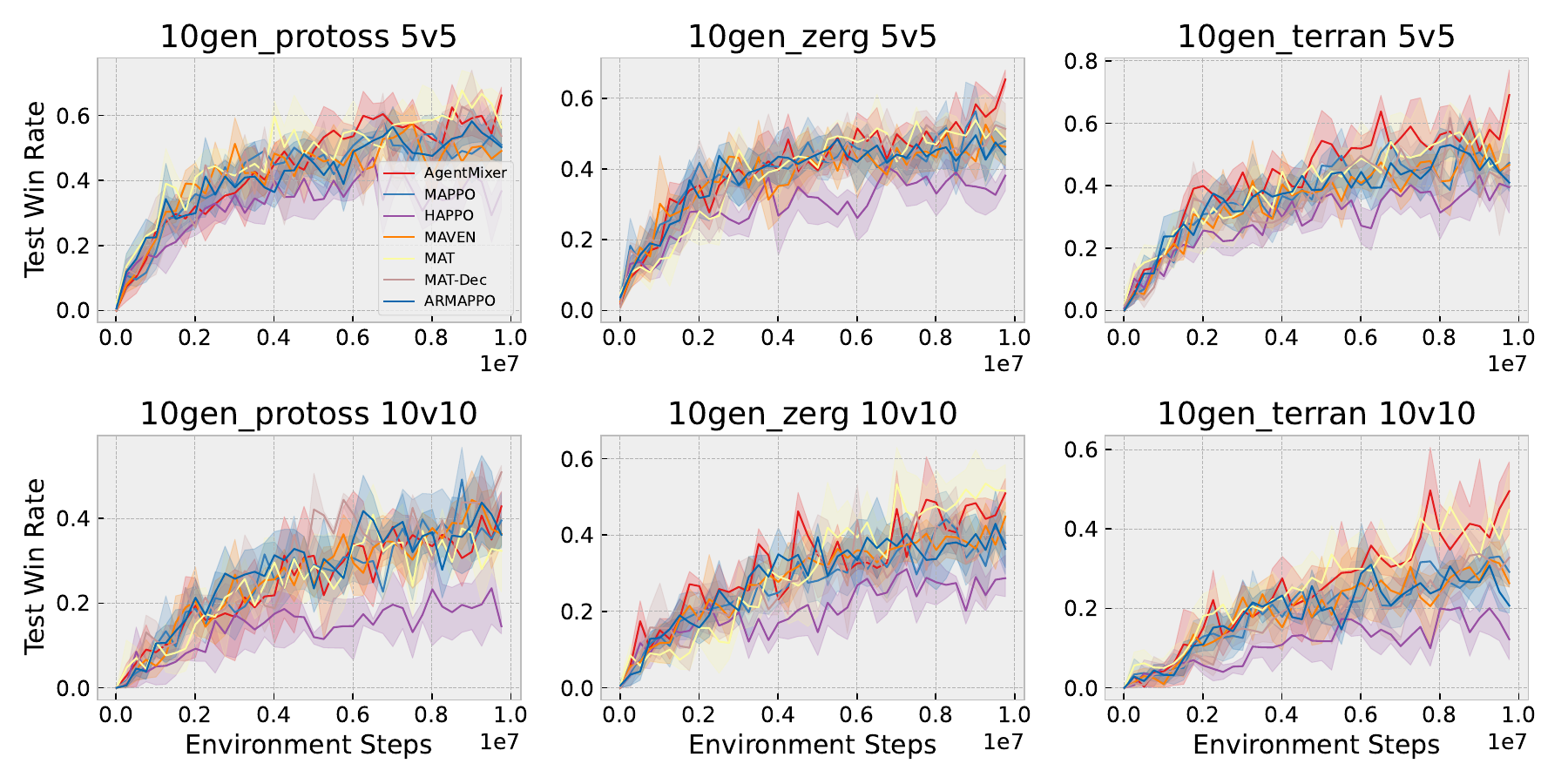} 
\caption{Comparison of the mean test win rate on SMACv2.}
\label{fig:expsmac-app}
\end{figure*}

We compare a baseline, MAPPO\_FULL, conditioned on full state information during evaluation. The results in Fig. \ref{fig:expsmac-abl} show that partially observable policies achieve similar performance as fully observable policies. This demonstrates that global information is not important for learning in the SMACv2 domain.
\begin{figure*}[htbp!]
\centering
\includegraphics[width=0.9\textwidth]{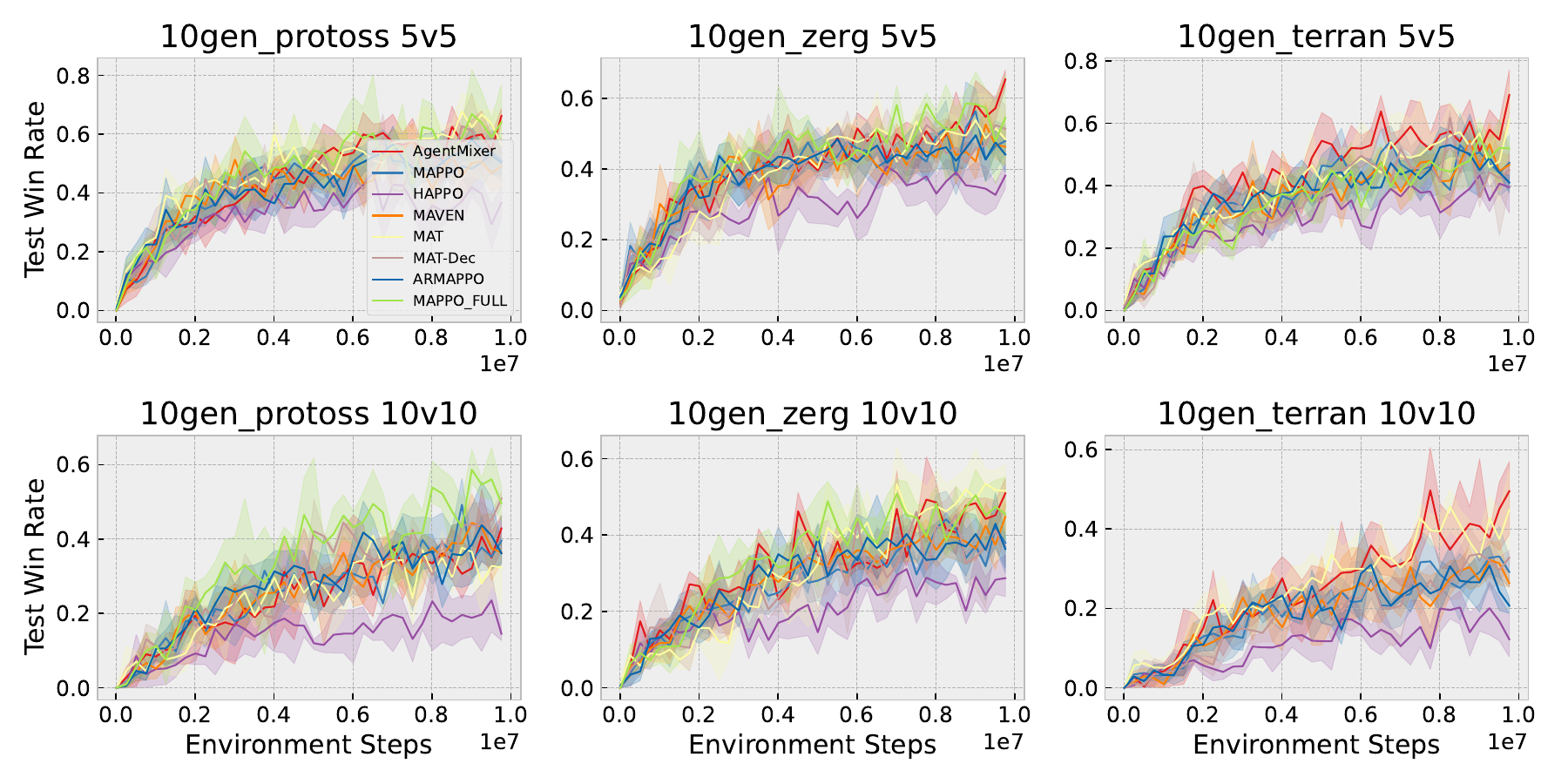} 
\caption{Ablations demonstrating the effect of full state information on SMACv2.}
\label{fig:expsmac-abl}
\end{figure*}

\subsection{More Experiments on Climbing Matrix Game}
The climbing matrix game \cite{10.5555/645529.658113} has the payoff shown in the left of Figure \ref{fig:matrix}. In this task, there are two agents to select the column and row index of the matrix respectively. The goal is to select the maximal element in the matrix. Although stateless and with simple action space, \textit{Climbing} is difficult to solve via independent learning, as the agents need to coordinate among two optimal joint actions. The right of Figure \ref{fig:matrix} shows that the almost compared baselines converge to a local optimum while only AgentMixer and MAT successfully learn the optimal policy. This is reasonable, as in MAPPO, HAPPO, and MAT-Dec, agents are fully independent of each other when making decisions, they may fail to coordinate their actions, which eventually leads to a sub-optimal joint policy. While with an explicit external coordination signal, MAVEN only finds the optima by chance. For MAT, since it learns a centralized auto-regressive policy, the second agent thus takes as input the first agent's action. It is not a surprise that MAT converges to the highest return due to using a centralized policy. In contrast, thanks to the introduced IGC mechanism, AgentMixer successfully learns fully decentralized optimal policies from the optimal correlated joint policy generated by the \textit{Policy Modifier} (PM) module.
\begin{figure}[htbp!]
    \centering
    \begin{subfigure}[htbp!]{.33\columnwidth}
        \includegraphics[width=.8\linewidth]{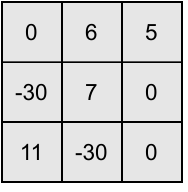}
    \end{subfigure}%
    \hfill
    \begin{subfigure}[htbp!]{.53\columnwidth}
        \includegraphics[width=.8\linewidth]{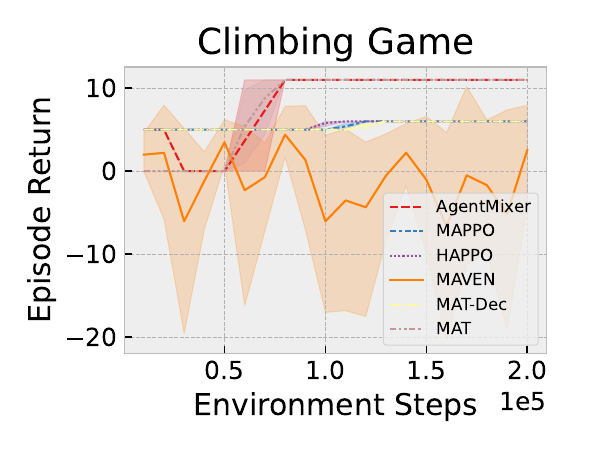}
        
    \end{subfigure}
    \caption{Left: the \textit{Climbing} matrix game; right: the performance comparison.}
    \label{fig:matrix}
\end{figure}

\subsection{More Experiments on Predator-Prey}
We use an environment similar to that described by \citet{li2020deep} where agents are controlled to capture prey. If a prey is captured, the agents receive a reward of 10. However, the environment penalizes any single-agent attempt to capture prey with a penalty. Note that when the penalty is non-zero, explicit coordination is needed to learn a meaningful joint policy. Figure \ref{fig:expprey} shows the average return for test episodes for varying penalties. While MAT performs best when the penalty equals zero, we should note that MAT takes an unfair full observation. In contrast, our method with partial observation outperforms all the baselines on \textit{Penalty=-1} task which necessitates a sophisticated coordination strategy. 
\begin{figure*}[htbp!]
\centering
\includegraphics[scale=0.4]{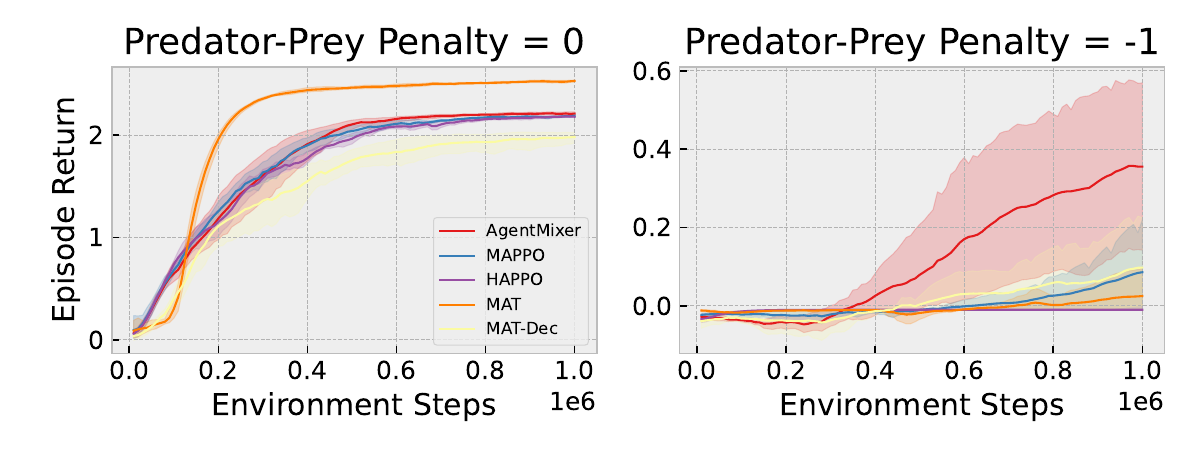} 
\caption{Performance comparison on Predator-Prey with different penalties for single-agent capture attempt.}
\label{fig:expprey}
\end{figure*}

\subsection{Ablation Study of Underlying Reinforcement Learning Algorithms}
We study the impact of single-agent algorithms in Figure \ref{fig:mujoco_sac}. Additional experiments confirm the universality of AgentMixer. Note the Q in AgentMixer-SAC is a centralized action-value function that takes as input the actions of all agents. In future work, we may explore integrating AgentMixer-SAC with QMix \cite{10.5555/3455716.3455894}.

\begin{figure*}[htbp!]
\centering
\includegraphics[scale=0.4]{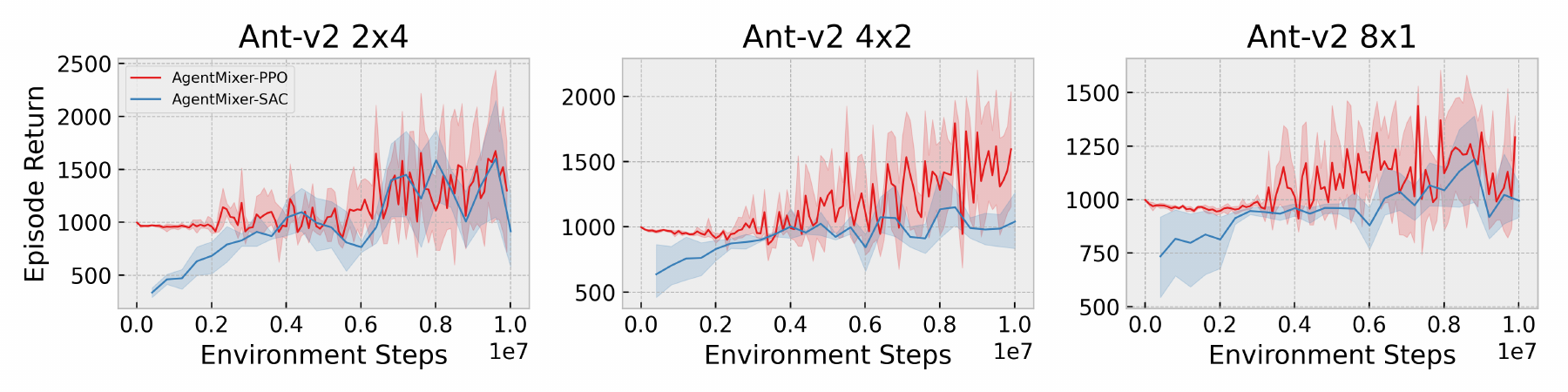} 
\caption{Impact of single-agent algorithms on AgentMixer’s performance in MA-MuJoCo.}
\label{fig:mujoco_sac}
\end{figure*}

\subsection{Ablation Study of PM modules}
We study the impact of different types of PM modules on SMACv2 in Table \ref{table:PM}. The highest performance is achieved by MLP-based PM.

\begin{table}[h!]
\centering
\begin{tabular}{l|ccc}
\hline
Map & MLPs & Hypernetworks & Attention \\ \hline
protoss 5v5 & $\textbf{0.68}_{0.04}$ & $0.60_{0.13}$ & $0.59_{0.14}$ \\ 
terran 5v5 & $\textbf{0.71}_{0.10}$ & $0.53_{0.13}$ & $0.60_{0.02}$ \\ 
zerg 5v5 & $\textbf{0.66}_{0.03}$ & $0.52_{0.08}$ & $0.49_{0.10}$ \\ \hline
\end{tabular}
\caption{Performance comparison across different architectures.}
\label{table:PM}
\end{table}

\subsection{Run Time Comparison}
\label{app:run-time}
AgentMixer introduces additional modules that propose more computation requirements. Table \ref{tab:run-time} shows the comparison of run-time (sec) and standard deviation on the SMAC-v2 10gen\_zerg scenario (steps=1e7):
\begin{table*}[ht]
    \centering
    \begin{tabular}{cccccc}
        \hline
        Task & MAPPO & HAPPO & MAT & MAVEN & Ours\\
        \hline
        5 vs 5 & $19778.6_{2927.1}$ & $16736_{1270.0}$ & $19333.2_{650.9}$ & $20883.4_{1832.7}$ & $33234.6_{4206.8}$ \\
        10 vs 10  & $24070_{2130.9}$ & $25217.2_{224.9}$ & $35758.8_{2842.8}$ & $27751.6_{966.5}$ & $60737.4_{1248.62}$ \\
        \hline
    \end{tabular}
    \caption{Comparison of run-time (sec) and standard deviation on the SMAC-v2 10gen\_zerg scenario (steps=1e7)}
    \label{tab:run-time}
\end{table*}

\subsection{Statistical Significance Test}
\label{app:sst}
We conducted statistical significance tests based on the T-test \cite{https://doi.org/10.1111/j.1476-5381.2011.01577.x}, using the average of the last 5 evaluations over 5 random seeds across different baselines on various benchmarks. In the tests, we set the null hypothesis as AgentMixer's superior performance compared to other baselines is due to chance (i.e., AgentMixer's performance is less than or equal to others).

The p-values reported in Figure \ref{fig:sst-mujoco} across various configurations (e.g., Ant-v2, HalfCheetah-v2, Humanoid-v2) consistently indicate significant results (p < 0.05) in favor of AgentMixer. This strongly suggests that the superior performance of AgentMixer is not a product of random variation but rather a consistent and significant improvement over the other baseline algorithms. Therefore, we conclude that AgentMixer demonstrates a robust and statistically significant advantage in performance in the MA-MuJoCo environment compared to the tested alternatives.
\begin{figure*}[htbp!]
\centering
\includegraphics[width=\textwidth]{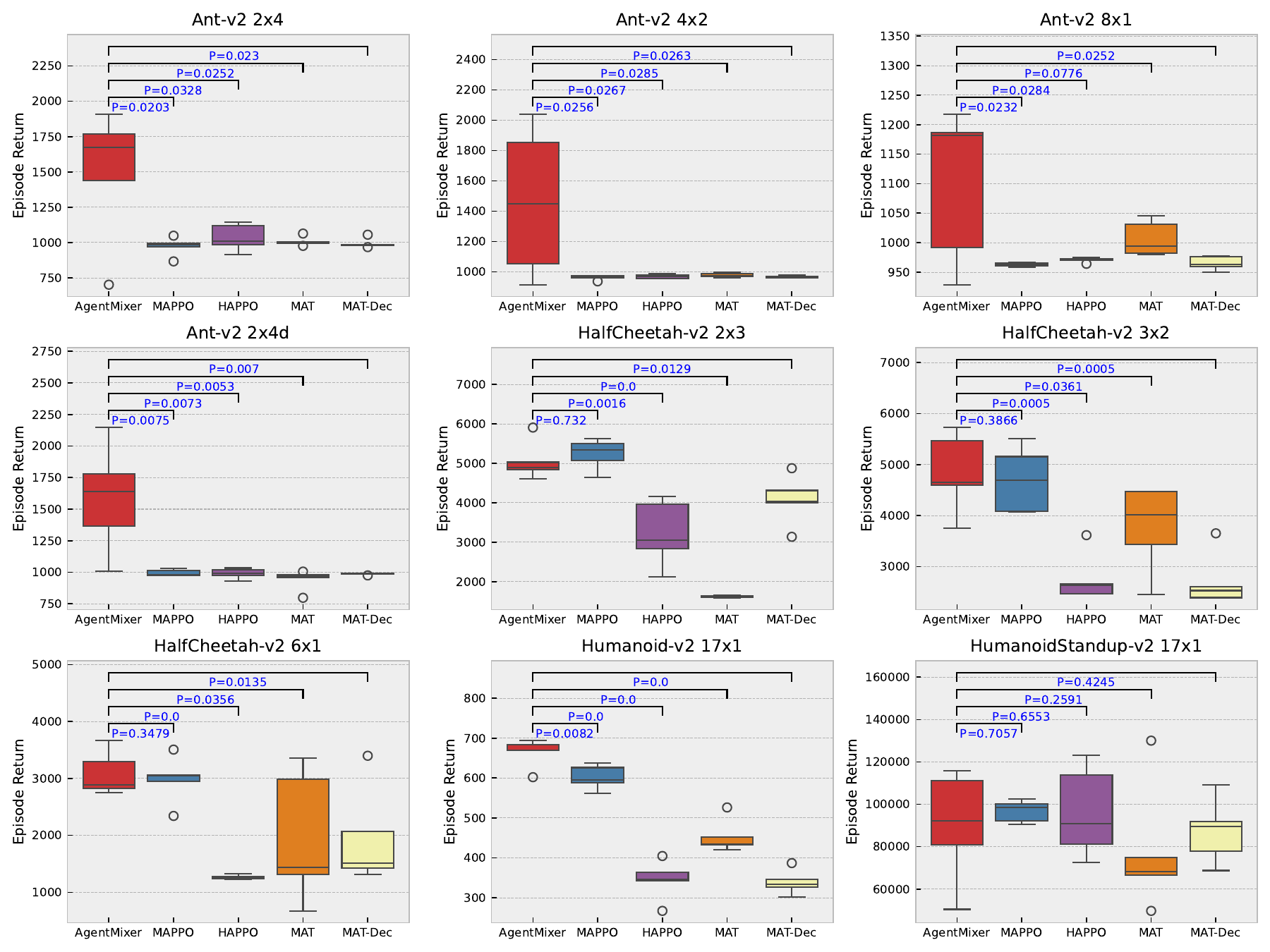} 
\caption{Statistical significance test on MA-MuJoCo. Significant results (p < 0.05) confirm the consistently superior performance of AgentMixer.}
\label{fig:sst-mujoco}
\end{figure*}


\begin{figure*}[htbp!]
\centering
\includegraphics[width=\textwidth]{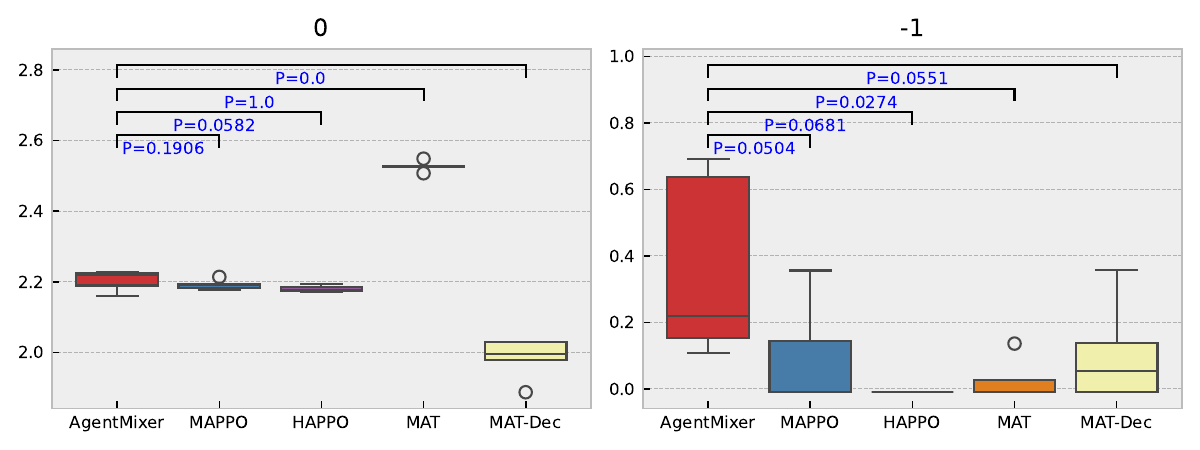} 
\caption{Statistical significance test on Predator-Prey. While MAT (oracle) achieves the best performance without penalty, it is greatly affected by non-zero penalty.}
\label{fig:sst-prey}
\end{figure*}

\section{Hyper-parameters}
\label{app:hyper}

For a fair comparison, the implementation of AgentMixer and the baselines are based on the implementation of MAPPO. We keep all hyper-parameters unchanged at the origin best-performing status. The proposed method and compared baselines are implemented into parameter independent version except MAT and MAT-Dec. The common and different hyper-parameters used for the baselines and AgentMixer across all domains are listed in Table \ref{tab:common-hyper-agentmixer}-\ref{tab:common-hyper MA-MuJoCo} respectively. All experiments were conducted on a cluster equipped with 10 Dell PowerEdge C4140.

\subsection{Common Hyper-parameters}
We list the common hyper-parameters across all the domains in Table~\ref{tab:common-hyper-agentmixer}-\ref{tab:common-hyper}.

\begin{table}[ht]
    \centering
    \begin{tabular}{c|c}
        \hline
        Parameter & Value \\
        \hline
        agent-mixing hidden dim & 32 \\
        channel-mixing hidden dim & 256 \\
        mixer lr & 5e-5 \\
        \hline
    \end{tabular}
    \caption{Unique hyper-parameters of AgentMixer.}
    \label{tab:common-hyper-agentmixer}
\end{table}

\begin{table}[ht]
    \centering
    \begin{tabular}{c|c}
        \hline
        Parameter & Value \\
        \hline
        block number & 1 \\
        head number & 1 \\
        \hline
    \end{tabular}
    \caption{Unique hyper-parameters of MAT / MAT-Dec.}
    \label{tab:common-hyper-mat}
\end{table}

\begin{table}[!ht]
    \centering
    \begin{tabular}{c|c}
        \hline
        Parameter & Value \\
        \hline
        noise dim & 2 \\
        epsilon start & 1.0 \\
        epsilon end & 1.0 \\
        target update interval & 200 \\
        \hline
    \end{tabular}
    \caption{Unique hyper-parameters of MAVEN.}
    \label{tab:common-hyper-maven}
\end{table}
\begin{table}[!ht]
    \centering
    \begin{tabular}{c|c}
        \hline
        Parameter & Value \\
        \hline
        \multicolumn{2}{c}{Training} \\
        \hline
        optimizer & Adam \\
        optimizer epsilon & 1e-5 \\
        weight decay & 0 \\
        max grad norm & 10 \\
        data chunk length & 1 \\
        \hline
        \multicolumn{2}{c}{Model} \\
        \hline
        activation & ReLU \\
        \hline
        \multicolumn{2}{c}{PPO} \\
        \hline
        ppo-clip & 0.2 \\
        gamma & 0.99 \\
        gae lambda & 0.95 \\
        \hline
    \end{tabular}
    \caption{Common hyper-parameters used across all domains.}
    \label{tab:common-hyper}
\end{table}

\subsection{Matrix Games}
We list the hyper-parameters used in matrix games in Table \ref{tab:common-hyper matrix}.
\begin{table}[ht]
    \centering
    \begin{tabular}{c|c}
        \hline
        Parameter & Value \\
        \hline
        \multicolumn{2}{c}{Training} \\
        \hline
        actor lr & 5e-4 \\
        critic lr & 5e-4 \\
        entropy coef & 0.01 \\
        \hline
        \multicolumn{2}{c}{Model} \\
        \hline
        hidden layer & 1 \\
        hidden layer dim & 64 \\
        \hline
        \multicolumn{2}{c}{PPO} \\
        \hline
        ppo epoch & 15 \\
        ppo-clip & 0.2 \\
        num mini-batch & 1 \\
        \hline
        \multicolumn{2}{c}{Sample} \\
        \hline
        environment steps & 200000 \\
        rollout threads & 50 \\
        episode length & 200 \\
        \hline
    \end{tabular}
    \caption{Common hyper-parameters used in matrix games.}
    \label{tab:common-hyper matrix}
\end{table}

\subsection{SMACv2}
We list the hyper-parameters used for each map of SMACv2 in Table \ref{tab:common-hyper smacv2}.
\begin{table}[ht]
    \centering
    \begin{tabular}{c|c}
        \hline
        Parameter & Value \\
        \hline
        \multicolumn{2}{c}{Training} \\
        \hline
        actor lr & 5e-4 \\
        critic lr & 5e-4 \\
        entropy coef & 0.01 \\
        \hline
        \multicolumn{2}{c}{Model} \\
        \hline
        hidden layer & 1 \\
        hidden layer dim & 64 \\
        \hline
        \multicolumn{2}{c}{PPO} \\
        \hline
        ppo epoch & 5 \\
        ppo-clip & 0.2 \\
        num mini-batch & 1 \\
        \hline
        \multicolumn{2}{c}{Sample} \\
        \hline
        environment steps & 10000000 \\
        rollout threads & 50 \\
        episode length & 200 \\
        \hline
    \end{tabular}
    \caption{Common hyper-parameters used in the SMACv2.}
    \label{tab:common-hyper smacv2}
\end{table}

\subsection{MA-MuJoCo}
The hyper-parameters used for each task of MA-MuJoCo are listed in Table \ref{tab:common-hyper MA-MuJoCo}.
\begin{table}[ht]
    \centering
    \begin{tabular}{c|c}
        \hline
        Parameter & Value \\
        \hline
        \multicolumn{2}{c}{Training} \\
        \hline
        actor lr & 3e-4 \\
        critic lr & 3e-4 \\
        entropy coef & 0 \\
        \hline
        \multicolumn{2}{c}{Model} \\
        \hline
        hidden layer & 2 \\
        hidden layer dim & 64 \\
        \hline
        \multicolumn{2}{c}{PPO} \\
        \hline
        ppo epoch & 5 \\
        ppo-clip & 0.2 \\
        num mini-batch & 1 \\
        \hline
        \multicolumn{2}{c}{Sample} \\
        \hline
        environment steps & 10000000 \\
        rollout threads & 40 \\
        episode length & 100 \\
        \hline
        \multicolumn{2}{c}{Environment} \\
        \hline
        agent obsk & 0 \\
        \hline
    \end{tabular}
    \caption{Common hyper-parameters used in the MA-MuJoCo.}
    \label{tab:common-hyper MA-MuJoCo}
\end{table}

\section{Broader Impact}
\label{app:bi}
We believe that the proposed work enhances the capacity for intelligent decision-making in complex and dynamic environments, and can have a positive impact on real-world multi-agent applications such as robotics, traffic management, and resource allocation. However, it is essential to consider potential concerns such as the discrepancy between the simulated environment and the real world. Another potential effect of directly implementing the derived policy is that it could lead to biased decision-making and privacy infringements. Mitigation strategies to address potential hazards could include the establishment of ethical guidelines and regulatory frameworks alongside the integration of transparency and explainability.

\end{document}